\pgfplotsset{compat=1.7}
\numberwithin{equation}{section}
\theoremstyle{definition}
\newtheorem{definition}{Definition}[section]
\newtheorem{lm}{Lemma}[section]
\title{Probabilistic Spacetimes}
\author[1]{Jakub K\'aninsk\'y}
\affil[1]{Institute of Theoretical Physics, Charles University in Prague}
\affil[  ]{jakubkaninsky@seznam.cz}
\date{\today}
\titleformat{\section}{\normalfont\scshape\LARGE}{\thesection}{1em}{}
\titleformat{\subsection}{\normalfont\LARGE}{\thesubsection}{1em}{}
\titleformat{\subsubsection}{\itshape\Large}{\thesubsubsection}{1em}{}
\renewenvironment{abstract}
 {\small
  \begin{center}
  \end{center}
  \list{}{
    \setlength{\leftmargin}{15mm}%
    \setlength{\rightmargin}{\leftmargin}%
  }
  \item\relax}
 {\endlist}
\begin{document}

\maketitle
\begin{abstract}
Probabilistic Spacetime is a~simple generalization of the classical model of spacetime in General Relativity, such that it allows to consider multiple metric field realizations endowed with probabilities. The motivation for such a generalization is a possible application in the context of some quantum gravity approaches, particularly those using the path integral. It is argued that this model might be used to describe simplified geometry, resulting e.g. from discretization, while keeping the continuous manifold; or it may be used as an effective description of a probabilistic geometry resulting from a full-fledged quantum gravity computation.
\end{abstract}

\vspace{4 mm}

\tableofcontents

\section{Preliminaries}

\subsection{Motivation}

The concept that we propose is motivated by quantum gravity, especially by the non-perturbative covariant approaches using sum over histories. These approaches take advantage of the well known Feynman's path integral formulation of quantum mechanics, which is (in case of non-relativistic quantum mechanics) equivalent to Schrodinger's equation, and constitutes a very general prescription for quantization. This prescription---as is believed---can be well applied to the case of gravity. In its essence, the path integral formulation \cite{Feynman2010} states that the probability $ P(b,a) $ of a system evolving from state $ a $ to state $ b $ is given as a square of an amplitude $ K(b,a) $, which is a sum of contributions from all paths between the states $ a $ and $ b $,
\begin{equation}
K(b,a) = \sum_{\text{paths}} e^{(i/\hbar)S[x(t)]},
\end{equation}
where $ S[x(t)] $ is the classical action associated with the path $ x(t) $. In case of quantum mechanics, the path is the position of a particle $ x $ as a function of the absolute time $ t $, and the sum over paths (or ``path integral'') can be defined rigorously using discretization of this absolute time and refining it in a limiting procedure.\\

In principle, one can use an analogical path integral for the gravitational field; however, it is not easy to provide a reliable definition, nor is it easy to extract a result once such definition is suggested. The sum-over-histories approaches differ widely in what they consider to be the \emph{state} and the \emph{path}. Furthermore, to provide the definition, discretization is usually needed, and it can either play a merely technical role (in which case the model must be eventually refined by a limiting procedure), or it can be interpreted as fundamental, in which case the spacetime is postulated to be discrete.\\

The sum over histories is a very strong principle, which allows us to approach quantum gravity in a manifestly covariant and conceptually clear way. To use this tool, one is bound to consider the set of all histories, and incorporate it into the theory in a way that allows for summing over it. These considerations raise questions about the fundamental structure of an element of this set, i.e., a spacetime. Each distinct way to treat the spacetime will produce a distinct theory, and it is of great interest to come up with such a treatment that would in the end produce a theory of the most favourable qualities (small amount of free parameters, computational accessibility of results, ...).\\

In order to illustrate different pathways towards addressing the problem, we will briefly compare a couple of covariant approaches connected with sum over histories.

\subsubsection*{Causal Sets}
Causal set is a discrete substratum for spacetime, which has been proposed by R. D. Sorkin in 1987 \cite{Sorkin1991} in an attempt to avoid some of the most common problems of quantum gravity at the time: nonrenormalizability within the standard canonical approach, the problem of singularity, and the computation of black hole entropy. All these have some kind of infinity involved, and it is argued that the solution might be found via the hypothesis that, at a subplanckian scale, the continuous manifold of General Relativity gives way to a ``discrete manifold'', carrying only a finite number of degrees of freedom in any finite volume. A nice feature of the discrete structure is that it may carry its metric relations within itself.\\

Causal set is a locally finite partially ordered set, or a set of elements with the relation $ \prec $ ``precedes'', which is transitive (if $ x \prec y $ and $ y \prec z $, then $ x \prec z $), acyclic (if $ x \prec y $ and $ y \prec x $, then $ x = y $) and locally finite (the number of elements $ y $ such that $ x \prec y \prec z $ is finite for all $ x, z $).\\

By specifying the ordering, the causal structure of the model is determined. The freedom in conformal transformation is then fixed by counting elements contained within some elementary volume and relating their number to the volume simply by $ N = V $, and thereby in some sense determining the volume element $ \sqrt{\vert g \vert} d x $. This blending of order with discreteness is what makes causal sets appealing. The single relation $ \prec $ codes (from macroscopic point of view) the topology, the differential structure, the geometry.\\

The model is very safe from diverging quantities, since all the computations are limited to, in one way or another, counting events. Thus, one can hope to provide a well defined path integral for the model, which would serve as a bridge to quantization. However, the problem is that it is generally very hard to relate the causal set to a spacetime manifold, since they are of very different nature (besides, one can of course recover the manifold only with limited precision). This separation of causal sets from the classical models and the difficulty to speak about geometry are maybe the main downsides of the theory.

\subsubsection*{Causal Dynamical Triangulations}
This more recent, still advancing approach \cite{Ambjoern2013} uses triangulations as a means to discretize the spacetime manifold and define a path integral on the discrete structure, only to refine it in a limiting procedure. The discrete structure is thus not considered fundamental in any way, it is only a tool to implement the path integral.\\

It has been shown on the example of 2D quantum gravity that it is possible to construct a well-defined lattice regularization with a UV lattice cut-off, which can be eventually removed to obtain a continuum limit.The advantage of the lattice approach is that there is no need for coordinates, as one employs the Regge calculus, and computes the Einstein-Hilbert action directly from the discretized geometry. A convenient choice is to use triangulations, i.e. lattices consisting of $ d $-simplices, which are $ d $-dimensional generalizations of triangles (2-simplices). The interior is assumed to be flat and the geometry is specified by giving the lengths of the $ \frac{d(d+1)}{2} $ edges. Together with the information about how the $ (d-1) $-dimensional boundary simplices are identified pairwise, this construction defines a triangulated manifold, which carries all the usual geometric information.\\

The path integral over geometries becomes the sum over all triangulations, with weight computed from the Regge implementation of Einstein-Hilbert action. In doing so, it is useful to restrict the set of piecewise linear geometries to a specific subset, namely the triangulations whose edges have all the same length. This subset of geometries is constructed by gluing together equilateral simplicial building blocks in all possible ways, compatible with certain constraints (topology, boundary conditions). Consequently, the variation in geometry is linked to the mutual connectivity of the building blocks created by the gluing and not to variations in the link length, giving rise to the name \emph{Dynamical Triangulations}.\\

In order to get a reasonably behaved theory, one may want to exclude geometries whose spatial topology is not constant in time. It turns out that these changes are associated with causality violations, and in the Lorentzian signature (i.e. before the Wick rotation is performed to switch into Euclidean geometry, which is a common practice for treating the integral), a general principle can be formulated which excludes these ill-behaved geometries. Then, one can take as a domain of the path integral the set of all Lorentzian piecewise flat triangulations whose causal structure is well defined, and where in particular no changes of the spatial topology are allowed to occur. This gives rise to \emph{Causal Dynamical Triangulations}.\\

\vspace{1 cm}
The two examples illustrate the span between path integral approaches to quantum gravity. In both of them, and other theories alike, discrete structure is used to establish some measure on geometries. In such theories, the usual unease lies in recovering the continuous manifold, be it because the model simply does not have sufficient information in it, or because the continuum limit is analytically and numerically out of reach. That is why we ask whether it would be possible, at least in some cases, to translate from the discrete model back to the \emph{continuous} manifold, describing the lack of information about geometry by some uncertain, probabilistic metric. Then, the simplified geometries in different approaches could be studied and compared within the usual general-relativistic model.\\

Another reason to consider the probabilistic version of the spacetime is a more general one, and does not relate to its micro-structure. We argue that ``probabilistic spacetime'' is what should result from a quantum-gravitational computation after the probability amplitudes are squared. In this sense, the model that we study within this article can be relevant in relation to any quantum gravity approach. There is also a possibility of using it as an effective large-scale model with probabilities provided from small-scale quantum gravity computations. In any case, such a generalization is a step towards the quantum nature, however small, and it may raise questions that are very relevant for our efforts.\\

\subsection{Probabilistic Metric Spaces}

The mathematical support and inspiration for our concept is provided by the well-established theory of probabilistic metric spaces. Its history dates back to 1942, when K. Menger introduced a generalization of Fréchet's abstract metric space, called a statistical metric space \cite{Menger1942}. Since then, the theory has been worked out to great extent, mainly by the most recent contributions of B. Schweizer and A. Sklar, whose book on the subject \cite{Schweizer2011} is our key reference.\\

To explain the idea behind probabilistic metric spaces, we will give word to K. Menger and partly reproduce the citation of his paper ``Modern Geometry and the Theory of Relativity'' from \cite{Schweizer2011}:

\begin{quote}

Poincaré, in several of his famous essays on the philosophy of science, characterized the difference between mathematics and physics as follows: In mathematics, if the quantity $ A $ is equal to the quantity $ B $, and $ B $ is equal to $ C $, then $ A $ is equal to $ C $; that is, in modern terminology: mathematical equality is a transitive relation. But in the observable physical continuum ``equal'' means indistinguishable; and in this continuum, if $ A $ is equal to $ B $, and $ B $ is equal to $ C $, it by no means follows that $ A $ is equal to $ C $. In the terminology of psychologists Weber and Fechner, $ A $ may lie within the threshold of $ B $, and $ B $ within the threshold of $ C $, even though $ A $ does not lie within the threshold of $ C $. 

\end{quote}

According to Menger, there is hope that this ``intransitive physical relation of equality'' could be retained in mathematics, and he suggests to do so with the use of probability. He writes:

\begin{quote}

Elaboration of this idea leads to the concept of a space in which a distribution function rather than a definite number is associated with every pair of elements. The number associated with two points of a metric space is called the distance between the two points. The distribution function associated with two elements of a statistical metric space might be said to give, for every $ x $, the probability that the distance between the two points in question does not exceed $ x $. Such a statistical generalization of metric spaces appears to be well adapted for the investigation of physical quantities and physiological thresholds. The idealization of local behaviour of rods and boards, implied by this statistical approach, differs radically from that of Euclid. In spite of this fact, or perhaps just because of it, the statistical approach may provide a useful means for geometrizing the physics of the microcosm.

\end{quote}

The idea of imprecise measurement and a geometry that incorporates the imprecision is remarkably old; and remarkably well resonates with the idea that the spacetime may have a fundamental limitation on the resolution of points implied by its quantum nature. We study the mathematical concept inspired by this idea and attempt to use it in favour of the present efforts. In the latter, the fundamentals of the theory of probabilistic metric spaces will be introduced, serving partly as a rigorous base for our future work and partly as a guideline for the line of thought that we might want to develop.

\subsubsection*{Notation}

Throughout the whole text, the set of real numbers is denoted $ \mathbb{R} $ and the set of real numbers extended with symbols $ - \infty $ and $ + \infty $ is denoted $ R $. The unit interval $ [0,1] $ will be denoted $ I $.

\subsubsection*{Probabilities}

\begin{definition}   \label{def:sigma_field}
Let $ \Omega $ be a nonempty set. A sigma field on $ \Omega $ is a family $ \mathscr{P} $ of subsets of $ \Omega $ such that:
\begin{enumerate}[label=\roman*., itemsep=-0.2mm]
\item $ \Omega $ is in $ \mathscr{P} $.
\item If $ A $ is in $ \mathscr{P} $, then $ \Omega \setminus A $, the complement of $ A $, is in $ \mathscr{P} $.
\item If $ A_n $ is in $ \mathscr{P} $ for $ n = 1, 2, ... $, then $ \bigcup_{n = 1}^{\infty} A_n $ is in $ \mathscr{P} $.
\end{enumerate}
\end{definition}

\vspace{4 mm}

We remark that if $ \Omega $ is a closed interval, then there is a unique smallest sigma field on $ \Omega $ that contains all the subintervals of $ \Omega $; this is the \emph{Borel field} on $ \Omega $, and its members are \emph{Borel sets} in $ \Omega $.

\begin{definition}   \label{def:measurable_function}
Let $ \mathscr{P} $ be a sigma field on a set $ \Omega $. A function $ f $ from $ \Omega $ into $ R $ is \emph{measurable with respect to $ \mathscr{P} $} if, for every $ x $ in $ R $, the inverse image $ f^{-1}[ -\infty, x ) $ of the interval\footnote{In greater detail, this means: the set $ \lbrace \omega ~ \vert ~ \exists y \in [ -\infty, x ): f^{-1}(y) = \omega \rbrace $ is in $ \mathscr{P} $.} $ [ -\infty, x ) $ is in $ \mathscr{P} $. If $ f $ is measurable with respect to the Borel field on a closed interval, then $ f $ is \emph{Borel measurable}.
\end{definition}

Now we know which functions we can measure. The measure itself is constructed as follows.

\begin{definition}   \label{def:probability_space}
A \emph{probability space} is a triple $ (\Omega, \mathscr{P}, P) $, where $ \Omega $ is a \break nonempty set, $ \mathscr{P} $ is a sigma field on $ \Omega $, and $ P $ is a function from $ \mathscr{P} $ into $ I $ such that the following conditions hold:
\begin{enumerate}[label=\roman*., itemsep=-0.2mm]
\item $ P(\Omega) = 1 $ and $ P(\varnothing) = 0 $.
\item If $ \lbrace A_n \rbrace $ is a sequence of pairwise disjoint sets in $ \mathscr{P} $, then
\begin{equation}
P \left( \bigcup_{n = 1}^{\infty} A_n \right) = \sum_{n = 1}^{\infty} P(A_n).
\end{equation}
\end{enumerate}
The function $ P $ is a \emph{probability measure on $ \Omega $}.
\end{definition}

\vspace{4 mm}

If $ \Omega $ is a closed interval $ [a, d] $ and $ F $ is a nondecreasing function on $ \Omega $ with $ F(a) = 0 $ and $ F(d) = 1 $, then $ F $ defines a unique probability measure $ P_F $, called the \emph{Lebesque-Stieltjes $ F $-measure on $ \Omega $}.

Since a probability measure, like any function, determines its domain, it is always possible to speak of a function as being \emph{$ P $-measurable}, rather than measurable with respect to the sigma field $ \mathscr{P} = \mathrm{Dom}~P $.

Any probability measure $ P $ determines a corresponding integral. When it exists, the integral of a $ P $-measurable function $ f $ with respect to $ P $ over a set $ A $ in $ \mathrm{Dom}~P $ is denoted by $ \int_A f \mathrm{d}P $. When dealing with a Lebesque-Stieltjes measure $ P_F $, we generally write $ \int_A f \mathrm{d}F $ rather than $ \int_A f \mathrm{d}P_F $.

\begin{definition}   \label{def:random_variable}
A \emph{random variable} on a probability space $ (\Omega, \mathscr{P}, P) $ is a \\ $ P $-measurable function on $ \Omega $.
\end{definition}

\subsubsection*{Probabilistic Metric Spaces}

\begin{definition}   \label{def:df}
A \emph{distribution function} (briefly, a \emph{d.f.}) is a nondecreasing function $ F $ defined on $ R $, with $ F(-\infty) = 0 $ and $ F(\infty) = 1 $. The set of all distribution functions that are left continuous\footnote{An example of a distribution function which is left-continuous but not right-continuous is given by Def. \ref{def:epsilon}.} on $ \mathbb{R} $ will be denoted by $ \Delta $. The subset of $ \Delta $ consisting of those elements $ F $ such that $ \lim_{x \to - \infty} F(x) = 0 $ and $ \lim_{x \to + \infty} F(x) = 1 $ will be denoted by $ \mathscr{D} $.
\end{definition}

\begin{definition}   \label{def:df_random_variable}
Let $ X $ be a random variable on the probability space \break $ (\Omega, \mathscr{P}, P) $. Then $ F_X $ is the function on $ R $ defined by $ F_X(-\infty) = 0 $, $ F_X(\infty) = 1 $, and
\begin{equation}
F_X(x) = P \lbrace \omega \text{ in } \Omega \vert ~ X(\omega) < x \rbrace ~~ \text{ for } -\infty < x < \infty.
\end{equation}
\end{definition}

\begin{definition}   \label{def:pdf}
Let $ X $ be a random variable on the probability space \break $ (\Omega, \mathscr{P}, P) $. A \emph{probability density function}\footnote{In \cite{Schweizer2011}, there is no notion of a p.d.f. We are adding this definition for our convenience.} (briefly, a \emph{p.d.f.}) is a function $ f_X $ from $ R $ into $ [0, \infty) $, such that
\begin{equation}
\int_a^{b} f_X(x) dx = P \lbrace \omega \text{ in } \Omega \vert ~ a \leq X(\omega) < b \rbrace \qquad \text{ for } a,b \in \mathbb{R}.
\end{equation}
\end{definition}

\begin{definition}
A \emph{distance distribution function} (briefly, a \emph{d.d.f.}) is a nondecreasing function $ F $ defined on $ R^{+} $ that satisfies $ F(0) = 0 $ and $ F(\infty) = 1 $ and is left-continuous on $ \mathbb{R}^+ $. The set of all distance distribution functions will be denoted by $ \Delta^{+} $ and the set of all $ F $ in $ \Delta^{+} $ for which $ \lim_{x \to + \infty} F(x) = 1 $ by $ \mathscr{D}^{+} $. Distance distribution functions are ordered by defining $ F \leq G $ to mean $ F(x) \leq G(x) $ for all $ x > 0 $.
\end{definition}

\begin{definition} \label{def:epsilon}
For any $ a $ in $ \mathbb{R} $, $ \varepsilon_a $, the \emph{unit step} at $ a $, is the function in $ \Delta $ given~by
\begin{equation*}
\varepsilon_a(x) = 
  \begin{cases}
    0,       & \quad -\infty \leq x \leq a,  \\
    1,  & \quad ~ ~ ~ ~ a < x \leq \infty. \\
  \end{cases}
\end{equation*}
\end{definition}

\begin{definition}
A \emph{binary operation} on a nonempty set $ S $ is a function $ T $ from $ S \times S $ into $ S $, i.e., a function $ T $ with $ \mathrm{Dom} ~ T = S \times S $ and $ \mathrm{Ran} ~ T \subseteq S $.
\end{definition}

\begin{definition}
A \emph{triangular norm} (briefly, a \emph{t-norm}) is an associative binary operation on $ I $ that is commutative, nondecreasing in each place, and such that $ T(a, 1) = a $ for all $ a $ in $ I $.
\end{definition}

\begin{definition} \label{def:triangle_function}
A \emph{triangle function} is a binary operation on $ \Delta^{+} $ that is commutative, associative, nondecreasing in each place and has $ \varepsilon_0 $ as identity.
\end{definition}

In order to classify some of the most important triangle functions, we first denote $ \mathscr{T} $ the set of binary operations on $ I $ that have $ 1 $ as identity; and the class $ \mathscr{L} $ to be the set of all binary operations $ L $ on $ \mathbb{R}^{+} $ that are nondecreasing, continuous on $ \mathbb{R}^{+} \times \mathbb{R}^{+} $ and have $ \mathrm{Ran}~L = \mathbb{R}^{+} $. Now we can write:

\begin{definition}
For any $ T $ in $ \mathscr{T} $ and $ L $ in $ \mathscr{L} $, $ \tau_{T,L} $ is the function on $ \Delta^{+} \times \Delta^{+} $ whose value, for any $ F, G $ in $ \Delta^{+} $, is the function $ \tau_{T,L}(F,G) $ defined on $ \mathbb{R}^+ $ by
\begin{equation*}
\tau_{T,L}(F,G)(x) = \sup_{\lbrace (u,v) \rbrace} \lbrace T(F(u), G(v)) \vert L(u,v) = x \rbrace.
\end{equation*}
If $ L = \mathrm{Sum} $, then we simply write $ \tau_T $.
\end{definition}

\begin{definition} \label{def:PMspace}
A \emph{probabilistic metric space} (briefly, a PM space) is a triple $ (S, \mathscr{F}, \tau) $ where $ S $ is a nonempty set (whose elements are the \emph{points} of the space), $ \mathscr{F} $ is a function from $ S \times S $ into $ \Delta^{+} $, $ \tau $ is a triangle function, and the following conditions are satisfied for all $ p, q, r $ in $ S $:
\begin{enumerate}[label=\roman*., itemsep=-0.2mm]
\item $ \mathscr{F}(p, p) = \varepsilon_0 $.
\item $ \mathscr{F}(p, q) \neq \varepsilon_0 $ if $ p \neq q $.
\item $ \mathscr{F}(p, q) = \mathscr{F}(q, p) $.
\item $ \mathscr{F}(p, r) \geq \tau(\mathscr{F}(p, q),\mathscr{F}(q, r)) $.
\end{enumerate}
\end{definition}

Probabilistic metric spaces can be subjected to a partial classification based on the properties of the triangle function $ \tau $, as follows.

\begin{definition} \label{def:triangle_function_classification}
Let $ (S, \mathscr{F}, \tau) $ be a probabilistic metric space. Then $ (S, \mathscr{F}, \tau) $ is proper if
\begin{equation*}
\tau(\varepsilon_a, \varepsilon_b) \geq \varepsilon_{a + b} ~ ~ \mathrm{for ~ all} ~ a,b ~ \mathrm{in} ~ \mathbb{R}^{+}.
\end{equation*}
If $ \tau = \tau_T $ for some t-norm $ T $, then $ (S, \mathscr{F}, \tau) $ is a \emph{Menger space}, or equivalently, $ (S, \mathscr{F}) $ is a \emph{Menger space under} $ T $. If $ \tau $ is convolution, then $ (S, \mathscr{F}) $ is a \emph{Wald space}.
\end{definition}

\subsubsection*{Distribution-Generated Spaces}
Now we will turn to another kind of spaces which is very closely related to probabilistic metric spaces and also very inspirational for what we will try to achieve later. They are called \emph{distribution-generated spaces}.\\

First, we have to be able to work with multidimensional distribution functions.

\begin{definition}
Let $ n $ be a positive integer. An $ n $-interval is the Cartesian product of $ n $ real intervals, and an $ n $-box the Cartesian product of $ n $ closed intervals. If $ J $ is the $ n $-interval $ J_1 \times J_2 \times ... \times J_n $, where for $ m = 1, 2, ..., n $ the interval $ J_m $ has endpoints $ a_m, e_m $, then the vertices of $ J_m $ are the points $ c = (c_1, ..., c_n) $ such that each $ c_m $ is equal to either $ a_m $ or $ e_m $.
\end{definition}

\begin{definition}
Let $ B $ be the $ n $-box $ [a, e] $, where $ a = (a_1, ..., a_n) $ and $ e = (e_1, ..., e_n) $. Let $ c = (c_1, ..., c_n) $ be a vertex of $ B $. If the vertices of $ B $ are all distinct (which is equivalent to saying $ a < e $), then
\begin{equation}
\mathrm{sgn}_B (c) =   \begin{cases}
    \phantom{-}1  & \quad \text{if } c_m = a_m \text{ for an even number of } m\text{'s},\\
    -1 & \quad \text{if } c_m = a_m \text{ for an odd number of } m\text{'s}.\\
  \end{cases}
\end{equation}
If the vertices of $ B $ are not all distinct, then $ \mathrm{sgn}_B (c) = 0 $.
\end{definition}

An illustration for this definition is provided by Fig. \ref{fig:volume}. Using the $ \mathrm{sgn}_B $, we can generalize the classical Newton-Leibniz formula to higher dimensions, as is done by the following definition.

\begin{figure}[ht]
\centering
\begin{tikzpicture}[scale=1]
	 \tikzstyle{vertexA}=[circle,minimum size=7pt,inner sep=0pt,fill=black!30]
	 \tikzstyle{vertexB}=[circle,minimum size=7pt,inner sep=0pt,fill=black]
	 \tikzstyle{selected vertex} = [vertex, fill=red!24]
	 \tikzstyle{selected edge} = [draw,line width=5pt,-,red!50]
	 \tikzstyle{edge} = [draw,-,gray]
	 \node[vertexA] (v0) at (0,0) {};
	 \node[vertexB] (v1) at (0,1) {};
	 \node[vertexB] (v2) at (1,0) {};
	 \node[vertexA] (v3) at (1,1) {};
	 \node[vertexB] (v4) at (0.23, 0.4) {};
	 \node[vertexA] (v5) at (0.23,1.4) {};
	 \node[vertexA] (v6) at (1.23,0.4) {};
	 \node[vertexB] (v7) at (1.23,1.4) {};
	 \node[vertexB] (v8) at (-1,-1) {};
	 \node[vertexA] (v9) at (-1,2) {};
	 \node[vertexB] (v13) at (-0.66,2.7) {};
	 \node[vertexA] (v12) at (-0.66,-0.3) {};
	 \node[vertexA] (v10) at (2,-1) {};
	 \node[vertexB] (v14) at (2.34,-0.3) {};
	 \node[vertexB] (v11) at (2,2) {};
	 \node[vertexA] (v15) at (2.34,2.7) {};
	 \draw[edge] (v0) -- (v1) -- (v3) -- (v2) -- (v0);
	 \draw[edge] (v0) -- (v4) -- (v5) -- (v1) -- (v0);
	 \draw[edge] (v2) -- (v6) -- (v7) -- (v3) -- (v2);
	 \draw[edge] (v4) -- (v6) -- (v7) -- (v5) -- (v4);
	 \draw[edge] (v8) -- (v9) -- (v13) -- (v12) -- (v8);
	 \draw[edge] (v0) -- (v4) -- (v12) -- (v8) -- (v0);
	 \draw[edge] (v1) -- (v9) -- (v13) -- (v5) -- (v1);
	 \draw[edge] (v2) -- (v10) -- (v14) -- (v6) -- (v2);
	 \draw[edge] (v8) -- (v10) -- (v14) -- (v12) -- (v8);
	 \draw[edge] (v3) -- (v11) -- (v15) -- (v7) -- (v3);
	 \draw[edge] (v10) -- (v11) -- (v15) -- (v14) -- (v10);
	 \draw[edge] (v9) -- (v11) -- (v15) -- (v13) -- (v9);
	 \draw[edge] (v0) -- (v2);
	 \draw[edge] (v2) -- (v6);
	 \draw[edge] (v6) -- (v4);
	 \draw[edge] (v4) -- (v5);
	 \draw[edge] (v5) -- (v13);
	 \draw[edge] (v13) -- (v12);
	 \draw[edge] (v12) -- (v14);
	 \draw[edge] (v14) -- (v15);
	 \draw[edge] (v15) -- (v7);
	 \draw[edge] (v7) -- (v3);
	 \draw[edge] (v3) -- (v1);
	 \draw[edge] (v1) -- (v9);
	 \draw[edge] (v9) -- (v11);
	 \draw[edge] (v11) -- (v10);
	 \draw[edge] (v10) -- (v8);
	 \draw[edge] (v8) -- (v0);
 \end{tikzpicture}
\vspace{-2 mm}
\caption{Diagram of a 4-box with vertices coloured gray or black, distinguishing between the two possible values of $ \mathrm{sgn}_B $.}
\label{fig:volume}
\end{figure}
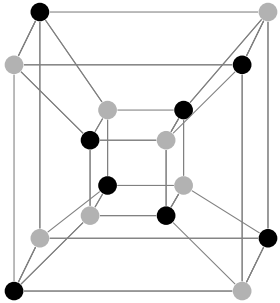

\begin{definition}
Let $ A_1, ..., A_n $ be nonempty subsets of $ R $, and let $ H $ be a function from $  A_1 \times ... \times A_n  $ into $ \mathbb{R} $. If $ B $ is an $ n $-box whose vertices are all in $ \mathrm{Dom}~H $, then the \emph{$ H $-volume of $ B $} is the sum
\begin{equation}
V_H(B) = \sum_{\mathrm{vertices ~ of ~ } B} \mathrm{sgn}_B (c) H(c).
\end{equation}
The function $ H $ is $ n $-increasing if $ V_H(B) \geq 0 $ for all $ n $-boxes $ B $ whose vertices are in $ \mathrm{Dom}~H $.
\end{definition}

\begin{definition}
Let $ H $ be a function from $ A_1 \times ... \times A_n $ into $ \mathbb{R} $, where each $ A_m $ is a subset of $ R $ that contains a least element $ a_m $. Then $ H $ is \emph{grounded} if $ H(x_1, ..., x_n) = 0 $ for all $ (x_1, ..., x_n) $ in $ \mathrm{Dom}~H $ such that $ x_m = a_m $ for at least one $ m $.
\end{definition}

\begin{definition}
Let $ n $ be an integer $ \geq 2 $. An \emph{$ n $-dimensional distribution function} (briefly, an \emph{$ n $-d.f.}) is a function $ H $ satisfying the following conditions:
\begin{enumerate}[label=\roman*.]
\item $ \mathrm{Dom}~H = R^{n} $,
\item $ H $ is $ n $-increasing and grounded,
\item $ H(\infty, \infty, ..., \infty) = 1 $.
\end{enumerate}
A \emph{joint distribution function} (briefly, a \emph{joint d.f.}) is a function $ H $ for which there is an integer $ n \geq 2 $ such that $ H $ is an $ n $-d.f.
\end{definition}

\begin{definition}
Let $ H $ be a function from $ A_1 \times ... \times A_n $ into $ \mathbb{R} $. Suppose each $ A_m $ is a nonempty subset of $ R $ and has a maximal element $ e_m $. Then $ H $ \emph{has margins}. The \emph{one-dimensional margins} of $ H $ are the functions $ H_m $ given by
\begin{equation}
\begin{gathered}
\mathrm{Dom} ~ H_m = A_m \qquad \mathrm{for} ~ m = 1,2, ..., n,\\
H_m(x) = H(e_1, ..., e_{m-1}, x, e_{m+1}, ..., e_n)
\end{gathered}
\end{equation}
for all $ x $ in $ \mathrm{Dom} ~ H_m $. The higher-dimensional margins are defined similarly, i.e., by fixing fewer places in $ H $.
\end{definition}

It can be shown that every one-dimensional margin of a joint d.f. is a distribution function. We shall denote those one-dimensional margins of $ H $ by $ F_1, ..., F_n $ and refer to them briefly as ``margins''.

\begin{definition}
Let $ n $ be an integer $ \geq 2 $. An \emph{$ n $-dimensional subcopula} (briefly, an \emph{$ n $-subcopula}) is a function $ C' $ that satisfies the following conditions:
\begin{enumerate}[label=\roman*.]
\item $ \mathrm{Dom} ~ C' = A_1 \times ... \times A_n $ where each $ A_m $ is a subset of $ I $ containing $ 0 $ and $ 1 $,
\item $ C' $ is $ n $-increasing and grounded,
\item For every positive integer $ m \leq n $, the margin $ C_m' $ of $ C' $ satisfies
\begin{equation}
C_m'(x) = x \qquad \text{for all } x \text{ in } A_m.
\end{equation}
\end{enumerate}
An \emph{$ n $-dimensional copula} (briefly, an \emph{$ n $-copula}) is an $ n $-subcopula $ C $ whose domain is the entire unit $ n $-cube $ I^{n} $.
\end{definition}

Just as a function $ F $ in $ \Delta $ defines a Lebesque-Stieltjes probability measure\footnote{For details, see Sections 2.3 and 6.5 of \cite{Schweizer2011} and references therein.} $ P_F $ on $ R $, so an $ n $-d.f. $ H $ defines a Lebesque-Stieltjes probability measure $ P_H $ on $ R^{n} $, and an $ n $-copula $ C $ defines a Lebesque-Stieltjes probability measure $ P_C $ on $ I^{n} $. If $ B $ is an $ n $-box in $ I^{n} $ and $ C $ is an $ n $-copula, then
\begin{equation}
\int_B \mathrm{d}C = P_C(B) = V_C(B).
\end{equation}
Similarly, if $ H $ is an $ n $-d.f. whose margins are in $ \Delta $, then
\begin{equation}
\int_{[(-\infty, ..., -\infty), u)} \mathrm{d}H = P_H [(-\infty, ..., -\infty), u) = H(u), \label{int_PH}
\end{equation}
where $ u $ is an $ n $-tuple from $ \mathrm{Dom} ~ H $ and $ (-\infty, ..., -\infty) $ is the $ n $-tuple of infinities. If the margins of $ H $ are only in $ \Delta^{+} $, then
\begin{equation}
\int_{[0, u)} \mathrm{d}H = P_H [0, u) = H(u).
\end{equation}

Now we can approach the definition of the distribution-generated space.\\

Let $ S $ be a set. With each point $ p $ of $ S $ associate an $ n $-dimensional distribution function $ G_p $ whose margins are in $ \mathscr{D} $, and with each pair $ (p, q) $ of distinct points of $ S $ associate a $ 2n $-dimensional distribution function $ H_{pq} $ such that
\begin{equation}
H_{pq}(u, (\infty, ...,\infty)) = G_p(u), \qquad H_{pq}((\infty, ...,\infty), v) = G_q(v) \label{HG}
\end{equation}
for any $ u = (u_1, ..., u_n) $ and $ v = (v_1, ..., v_n) $ in $ R^n $. For any $ x \geq 0 $, let $ Z(x) $ be the cylinder in $ R^{2n} $ given by
\begin{equation}
Z(x) = \lbrace (u,v) ~ \mathrm{in} ~ \mathbb{R}^{2n} \vert ~ \vert u - v \vert < x \rbrace
\end{equation}
and define $ F_{pq} $ in $ \mathscr{D}^+ $ via
\begin{equation}
F_{pq}(x) = \int_{Z(x)} \mathrm{d}H_{pq}. \label{F_pq}
\end{equation}

In order to obtain a PM space by means of the foregoing construction, the definition of $ F_{pg} $ via (\ref{F_pq}) must be extended to the case in which $ p = q $ in such a way that (\ref{HG}) holds and $ F_{pp} = \varepsilon_0 $. We achieve this with the aid of
\begin{lm} \label{lem:dgs}
Let $ G $ be an $ n $-dimensional distribution function all of whose margins are in $ \mathscr{D} $, and let $ H $ be the $ 2n $-dimensional distribution function given by
\begin{equation}
H(u,v) = G(\mathrm{Min}(u_1, v_1), ..., \mathrm{Min}(u_n, v_n)) \label{Huv}
\end{equation}
for all $ u = (u_1, ..., u_n) $ and $ v = (v_1, ..., v_n) $ in $ R^{n} $. Then
\begin{enumerate}[label=\roman*.]
\item $ H(u, (\infty, ..., \infty)) = H((\infty, ..., \infty), u) = G(u) $ for all $ u $ in $ R^{n} $.
\item For any positive integer $ m \leq n $ is $ H(u_1, ..., u_m, ..., u_n, v_1, ... , v_m, ..., v_n) = $\\ $ H(u_1, ..., v_m, ..., u_n, v_1, ... , u_m, ..., v_n) $.
\item $ P_H(Z(x)) = 1 $ for all $ x > 0 $.
\end{enumerate}
\end{lm}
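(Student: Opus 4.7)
The plan splits along the three assertions, with (i) and (ii) following immediately from the defining formula and only (iii) carrying real content. For (i), I substitute $v_m = \infty$ for every $m$ into $H(u,v) = G(\mathrm{Min}(u_1,v_1),\ldots,\mathrm{Min}(u_n,v_n))$; since $\mathrm{Min}(u_m,\infty) = u_m$, the expression collapses to $G(u)$, and the companion identity with the roles reversed follows by the same mirroring. For (ii), the commutativity $\mathrm{Min}(u_m,v_m) = \mathrm{Min}(v_m,u_m)$ shows that swapping $u_m$ with $v_m$ in the argument list leaves every componentwise minimum unchanged, hence leaves $H$ invariant.

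For (iii), my plan is to recognize $H$ as the joint distribution function of the degenerate random vector $(X,X)$, where $X$ is an $R^n$-valued random vector with d.f.\ $G$. Such an $X$ can be constructed on a probability space $(\Omega,\mathscr{P},P)$ from the Lebesgue-Stieltjes measure $P_G$, which exists because the margins of $G$ lie in $\mathscr{D}$. For any $u,v$ in $R^n$,
\begin{align*}
P\{\omega : X(\omega) < u \text{ and } X(\omega) < v\}
&= P\{\omega : X_m(\omega) < \mathrm{Min}(u_m,v_m) \text{ for every } m\} \\
&= G(\mathrm{Min}(u_1,v_1),\ldots,\mathrm{Min}(u_n,v_n)) = H(u,v),
\end{align*}
so the joint d.f.\ of $(X,X)$ equals $H$. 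By uniqueness of the Lebesgue-Stieltjes measure associated with a joint d.f., $P_H$ is therefore the pushforward of $P_G$ under the diagonal map $\delta\colon R^n \to R^{2n}$, $\delta(u) = (u,u)$. Since $\delta^{-1}(Z(x)) = \{u \in R^n : |u - u| < x\} = R^n$ for every $x > 0$, I conclude $P_H(Z(x)) = P_G(R^n) = 1$.

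The main obstacle is the identification $P_H = \delta_* P_G$, which is standard but relies on the uniqueness theorem for measures generated by joint d.f.'s implicit in (\ref{int_PH}). A direct alternative is to compute $V_H([a,e] \times [a',e'])$ for an arbitrary $2n$-box and verify that it equals $P_G$ of the componentwise intersection of $[a,e]$ and $[a',e']$; any $2n$-box at positive distance from the diagonal $\{u = v\}$ then has zero $H$-volume, and covering $R^{2n} \setminus Z(x)$ by such boxes yields (iii) directly. I prefer the probabilistic route, as it is conceptually cleaner and simultaneously checks that the formula for $H$ really does define a valid $2n$-d.f.
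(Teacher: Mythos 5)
Your proof is correct, but note that the paper itself offers no proof of this lemma: it writes ``We will omit the proof'' and defers to Schweizer--Sklar, Chapter 10, so there is no in-paper argument to compare against. Parts (i) and (ii) are immediate from the formula, as you say. For (iii), your identification of $H$ as the joint d.f.\ of the degenerate vector $(X,X)$ with $X \sim P_G$ is the conceptually cleanest route, and the uniqueness step ($P_H = \delta_* P_G$ because the two measures agree on the generating $\pi$-system of half-open boxes) is standard and fine; it also delivers, as you observe, the side benefit that $H$ really is a $2n$-d.f. One small point deserves sharpening: you invoke the hypothesis that the margins of $G$ lie in $\mathscr{D}$ only to justify the ``existence'' of $P_G$, but the Lebesgue--Stieltjes measure would exist for margins merely in $\Delta$ --- the hypothesis is instead what guarantees that $P_G$ assigns no mass to the points at infinity, i.e.\ $P_G(\mathbb{R}^n) = 1$ rather than just $P_G(R^n) = 1$. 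Since $Z(x)$ is by definition a subset of $\mathbb{R}^{2n}$, your last line should read $\delta^{-1}(Z(x)) = \mathbb{R}^n$ and conclude via $P_G(\mathbb{R}^n) = 1$; without margins in $\mathscr{D}$ this step fails and (iii) is false. Your sketched alternative (computing $V_H$ of boxes disjoint from a neighbourhood of the diagonal and showing it vanishes) is essentially the textbook argument in Schweizer--Sklar and avoids the measure-uniqueness machinery at the cost of a combinatorial box computation; either route is acceptable.
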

\begin{proof}
We will omit the proof.\footnote{It can be found in \cite{Schweizer2011}, Chapter 10, p. 158.}
\end{proof}

It follows from Lemma \ref{lem:dgs} that if $ H_{pp} $ is defined by (\ref{Huv}) with $ G = G_p $, and $ F_{pp} $ by (\ref{F_pq}), then $ F_{pp} = \varepsilon_0 $. Furthermore, if $ H_{pq}(u,v) = H_{qp}(v,u) $ then $ F_{pq} = F_{qp} $, which brings us to

\begin{definition} \label{def:dgs}
Let $ S $ be a set, $ \mathscr{F} $ a function from $ S \times S $ into $ \Delta^{+} $, and $ n $ a positive integer. Then $ (S, \mathscr{F} ) $ is a distribution-generated space (over $ R^{n} $) if the following conditions are satisfied:
\begin{enumerate}[label=\roman*.]
\item For every $ p $ in $ S $ there is an $ n $-d.f. $ G_p $ whose margins $ G_{p 1}, ..., G_{p n} $ are in $ \mathscr{D} $.
\item For every $ p,q $ in $ S $ there is a $ n $-d.f. $ H_{pq} $ which satisfies (\ref{HG}), is such that $ H_{pq}(u,v) = H_{qp}(v,u) $ for all $ u,v $ in $ R^{n} $, and is given by (\ref{Huv}) with $ G = G_p $ when $ p = q $.
\item For every $ p, q $ in $ S $, $ F_{pq} $ is given by (\ref{F_pq}).
\end{enumerate}
\end{definition}

\section{Probabilistic Spacetime}

\subsection{Introducing Probability into Geometry}

We have reached the point in which we want to define the ``probabilistic spacetime''. Roughly speaking, it should be a collection of metric tensor fields defined on a manifold, endowed with probabilities. Indeed, we will be considering \emph{a single manifold}, that is, the ``probabilistic spacetime'' in our treatment will have a fixed topology.\\

If we want a truly probabilistic object, i.e., an object for which the probabilities of distinct outcomes fulfil the usual rules, we should seek inspiration in probability theory. After trying out numerous variants, our efforts have resulted in the following implementation. It starts from the most general notions and continues towards more specific concepts, allowing for considerations on all levels of abstraction.\\

As it turns out, the very first object that we need is at the same time the most interesting one:

\begin{definition}
\emph{The space of metric tensor realizations} $ \mathscr{R} $ on the manifold $ M $ is the set of all Lorentzian metric tensor fields $ g $ on $ M $.
\end{definition}

Note that we are considering \emph{all} possible metrics, not only equivalence classes given by diffeomorphisms. Manifold is a \emph{set} with an atlas, and while the atlas is said to be maximal and comprises all mutually compatible choices of maps (which particularly means all different coordinate charts), the elements of the manifold (i.e., distinct events) are unique. Thus, there are non-equivalent elements $ g_A, g_B $ in $ \mathscr{R} $ such that there is a diffeomorphism $ \phi $ on $ M $ inducing $ g_B $ from $ g_A $. This is important since we need to have the information about where in the manifold we are.\\

We can build a probabilistic space on $ \mathscr{R} $:

\begin{definition}
\emph{Probabilistic metric tensor field} $ \mathcal{g} $ is a random variable on the probability space $ (\mathscr{R}, \mathscr{P}^{g}, P^{g}) $.
\end{definition}

If we want to work with $ \mathcal{g} $, we first have to define some convenient, useful sigma field $ \mathscr{P}^{g} $ and, subsequently, the measure $ P^{g} $. We should be careful though. The space of metric tensor realizations $ \mathscr{R} $ is $ \infty $-dimensional so it is not easy to define a ``fine'' measure on it. For instance, we cannot use the Borel measure. That is unfortunate, since it does not give us much hope for defining a measure on the ``perfect'' sigma field:

\begin{definition}
\emph{Perfect sigma field} on $ \mathscr{R} $ is the set $ \mathscr{P}^{g}_{\text{perf.}} = \lbrace X \vert X \subseteq \mathscr{R} \rbrace $.
\end{definition}

One can see that such a set is indeed a sigma field. We will later show that it is possible to have a useful measure on $ \mathscr{P}^{g}_{\text{perf.}} $, albeit a trivial one.\\

Eventually, the awaited definition:

\begin{definition} \label{def:pst}
\emph{Probabilistic spacetime} (briefly, \emph{p.st.}) $ \mathcal{M} = (M, \mathcal{g}) $ is an $ n $-dimensional, $ C^{\infty} $, real manifold $ M $ together with the probabilistic metric tensor field $ \mathcal{g} $.
\end{definition}

Since any probability measure determines a corresponding integral, we now can integrate a $ P^{g} $-measurable function $ f $ w.r.t. $  P^{g} $ over a set $ A $ in $ \mathrm{Dom} ~ P^{g} $. The integral is denoted  $ \int_A f ~ d P^{g} $. In applications, if there is no space for confusion, we will write simply $ d G $ instead of $ d P^{g} $.\\

As we have already noted, the theory is to no use if we cannot provide a useful sigma algebra with some measure on $ \mathscr{R}  $. In the following, we show maybe the easiest way to do that.

\subsubsection*{Fragmentation of $ \mathscr{R} $}

\begin{definition}
\emph{Fragmentation $ y $ of $ \mathscr{R} $} is a function from $ \mathscr{R} $ onto a set $ S $. If $ S $ is countable, we call the fragmentation \emph{discrete}, if $ S $ is homomorphic to $ \mathbb{R}^{N} $, we call the fragmentation \emph{$ N $-parametric}.
\end{definition}

\begin{definition}
Let there be a fragmentation $ y $ from $ \mathscr{R} $ onto a set $ S $. Then the preimage $ y^{-1}(s) $ of an element $ s $ of $ S $ is called a \emph{fragment of $ \mathscr{R} $}.
\end{definition}

\begin{lm} \label{lm:frag}
Let there be a fragmentation $ y $ from $ \mathscr{R} $ onto a set $ S $. Then a sigma field $ \mathscr{P}^{S} $ on $ S $ uniquely defines a sigma field $ \mathscr{P}^{g} $ on $ \mathscr{R} $.
\end{lm}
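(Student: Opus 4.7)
The plan is to take the natural pullback construction: define
\[
\mathscr{P}^{g} \;=\; \bigl\{\, y^{-1}(A) \;\big|\; A \in \mathscr{P}^{S} \,\bigr\},
\]
and then check that this collection satisfies the three axioms of Definition~\ref{def:sigma_field}. Uniqueness is built into the definition: once the map $y$ and the sigma field $\mathscr{P}^{S}$ are fixed, every element of $\mathscr{P}^{g}$ is determined by a single element of $\mathscr{P}^{S}$ via the (set-valued) preimage, so the construction is canonical and leaves no room for ambiguity.

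The verification of the three axioms proceeds directly, exploiting the standard compatibility of set-theoretic preimages with complement and countable union. First, since $S \in \mathscr{P}^{S}$ (axiom i for $\mathscr{P}^{S}$) and $y$ is defined on all of $\mathscr{R}$, we have $\mathscr{R} = y^{-1}(S) \in \mathscr{P}^{g}$. Second, given $X = y^{-1}(A) \in \mathscr{P}^{g}$ with $A \in \mathscr{P}^{S}$, one has $\mathscr{R} \setminus X = y^{-1}(S \setminus A)$, and $S \setminus A \in \mathscr{P}^{S}$ by axiom ii for $\mathscr{P}^{S}$, so $\mathscr{R} \setminus X \in \mathscr{P}^{g}$. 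Third, for a countable sequence $X_n = y^{-1}(A_n)$ with $A_n \in \mathscr{P}^{S}$, the identity
\[
\bigcup_{n=1}^{\infty} y^{-1}(A_n) \;=\; y^{-1}\!\left( \bigcup_{n=1}^{\infty} A_n \right)
\]
together with axiom iii for $\mathscr{P}^{S}$ yields $\bigcup_n X_n \in \mathscr{P}^{g}$.

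I do not expect any real obstacle here; the proof is essentially a routine check that set-theoretic preimage commutes with the Boolean operations used in the sigma-field axioms. The only point worth remarking on is the word \emph{uniquely}: because a fragmentation is by assumption surjective onto $S$, the map $A \mapsto y^{-1}(A)$ from $\mathscr{P}^{S}$ into the power set of $\mathscr{R}$ is injective (distinct $A$'s give distinct preimages, since $y$ hits every point of $S$), which ensures that the correspondence $\mathscr{P}^{S} \leftrightarrow \mathscr{P}^{g}$ is unambiguous in both directions.
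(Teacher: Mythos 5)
Your proof is correct and follows essentially the same route as the paper: define $\mathscr{P}^{g}$ as the collection of preimages $y^{-1}(A)$ for $A$ in $\mathscr{P}^{S}$ and verify the three axioms of Definition \ref{def:sigma_field} using the compatibility of preimages with complements and countable unions. Your added remark that surjectivity of $y$ makes $A \mapsto y^{-1}(A)$ injective is a slightly more explicit justification of the word ``uniquely'' than the paper gives, but the argument is the same.
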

\begin{proof}
Let $ \mathscr{P}^{S} $ be a sigma field on $ S $ and $ \mathscr{P}^{g} = \lbrace y^{-1}(Q) \vert \text{ for all } Q \text{ from } \mathscr{P}^{S} \rbrace $. We need to check the following:
\begin{enumerate}[label=\roman*., itemsep=-0.2mm]
\item By definition, there is $ S $ in $ \mathscr{P}^{S} $. Then there is a set $ y^{-1}(S) $ in $ \mathscr{P}^{g} $. But $ y^{-1}(S) = \mathscr{R} $ (because the fragmentation is surjective, $ y(g) $ is always in $ S $), so $ \mathscr{R} $ is in $ \mathscr{P}^{g} $.
\item For a set $ A $ in $ \mathscr{P}^{S} $, the set $ S \smallsetminus A $ is also in $ \mathscr{P}^{S} $. Thus, for a set $ B $ in $ \mathscr{P}^{g} $, such that $ B = y^{-1}(A) $, there is also a set $ y^{-1}(S \smallsetminus A) $. It is equal to $ \mathscr{R} \smallsetminus B $.
\item Analogical argument can be easily done for a unification $ \bigcup_{n = 1}^{\infty} B_n $ of sets $ B_n $ from $ \mathscr{P}^{g} $.
\end{enumerate}
Thus, according to Def. \ref{def:sigma_field}, $ \mathscr{P}^{g} $ is a sigma field.
\end{proof}

Lemma \ref{lm:frag} is indeed useful. Although we cannot define the Borel sigma field on $ \mathscr{R} $, we can define it on the $ N $-dimensional set $ S $ and it will determine some ``rough'' sigma field on $ \mathscr{R} $. Then we can easily provide it with measure by copying a chosen measure on $ S $. This construction enables us to measure probabilities of sets of fragments of $ \mathscr{R} $.\\

The discrete fragmentation is not much interesting, and that is for two reasons. First, the sigma field $ \mathscr{P}^{g} $ arising from a discrete fragmentation is arguably too simple to be of any use; and second, we can introduce such a sigma field straightforwardly without using the function $ y $. Still, the definitions provide an intuitive description, making it easy to speak about the simple sigma fields with discrete or Borel measures.\\

An example of an $ N $-parametric fragmentation will be given in the next paragraph.

\subsubsection*{Locally Probabilistic Spacetime}

In some special cases, where we do not need to speak about the metric tensor \emph{field}, but rather only the metric tensors in distinct points of the manifold, it suffices to utilize a much simpler concept. It is built analogically, but with a major difference.

\begin{definition}
\emph{Probabilistic tensor} $ \mathcal{T} $ is a random variable on the probability space $ (T^{(r,s)}, \mathscr{P}^T, P^T) $, where $ T^{(r,s)} $ is $ r $-times contravariant and $ s $-times covariant tensor space; $ \mathscr{P}^T $ is the sigma field on $ T^{(r,s)} $; and $ P^T $ is the probability measure on $ T^{(r,s)} $.
\end{definition}

Unlike in the preceding case, there is a Borel field on $ T^{(r,s)} $. If $ T^{(r,s)} $ is to be a tensor space in some point of an $ n $-dimensional manifold, then it is homomorphic to $ \mathbb{R}^{m} $, where $ m = (r+s)^{n} $. Further, it suffices to provide an $ m $-distribution function $ H $ on $ T^{(r,s)} $ and it will uniquely define the Lebesque-Stieltjes integral on $ T^{(r,s)} $, as noted by (\ref{int_PH}). The random variable $ \mathcal{T} $ is uniquely defined via $ H $. We can also use the $ m $-dimensional p.d.f. $ h(x) $, where $ x $ stands for the $ m $-tuple $ (x_1, ..., x_m) $, satisfying
\begin{equation}
\int_{(-\infty,...,-\infty)}^{x} h(x') ~ dx' = H(x).
\end{equation}
The Lebesque-Stieltjes integral defined by $ P^T_H $ will be denoted $ \int dH $. The probability that a \emph{realization} $ T $ of $ \mathcal{T} $ is from the set $ Q \in \mathscr{P}^T $ (in other words, $ Q \subseteq T^{(r,s)} $), is $ \int_Q dH $. One typically computes this integral by converting it into the Riemann integral by replacing $ dH $ by $ h(x) ~ dx = h(x) ~ dx_1 ... dx_m  $.\\

Now that we have probabilistic tensors, we can treat the metric tensor easily:

\begin{definition} \label{def:prob_metric_tensor}
\emph{Probabilistic metric tensor} $ \mathcal{g}_x $ is a probabilistic tensor on the probability space $ (T^{g(x)}, \mathscr{P}^{g(x)}, P^{g(x)}) $, where $ T^{g(x)} $ is the tangent space $ T^{(0,2)}_x M $ in a point $ x $ of $ M $ narrowed to symmetric, nondegenerate Lorentzian tensors.
\end{definition}

\begin{definition} \label{def:lpst}
\emph{Locally probabilistic spacetime} (briefly, \emph{l.p.st.}) $ \mathcal{M}_{\text{loc.}} = (M, \mathcal{g}_x) $ is an $ n $-dimensional, $ C^{\infty} $, real manifold $ M $ together with the probabilistic metric tensor $ \mathcal{g}_x $ defined in every point $ x $ of $ M $.
\end{definition}

The probabilistic metric tensor $ \mathcal{g}_x $ in a point $ x $ determines a roughly-measured p.st. via an $ N $-parametric fragmentation (for a 4D spacetime, $ N = 10 $). Let $ y $ be a function from $ \mathscr{R} $ onto $ T^{g(x)} $ (Def. \ref{def:prob_metric_tensor}) for some point $ x $ of $ M $, such that $ y(g) = g(x) $. Then we can define $ \mathscr{P}^{g} $ and $ P^{g} $ by specifying the probabilistic metric $ \mathcal{g}_x $, which we can very well do. Of course, the $ P^{g} $ will be very rough, because it will be only able to measure the sets of all metric field realizations with a common metric tensor in point $ x $. This is fine as long as we do not care about the metric in any other point.\\

We can ask how well it is possible to describe a p.st. $ \mathcal{M} $ by multiple probability measures, e.g. those given by $ \mathcal{g}_x $ for all points of the manifold. The brief answer is brought hereafter.\\

Given a p.st. $ \mathcal{M} $, one can easily reduce it to a l.p.st. $ \mathcal{M}_{\text{loc.}} $ using the discussed fragmentation $ y(g) = g(x) $ (although only if we have a sigma field and measure on $ \mathscr{R} $ defined within $ \mathcal{g} $ that can be used for the task, that is, they must contain the information about the probability of each metric tensor realization in each point) separately for all $ x $ of $ M $, and defining the corresponding probabilistic tensors $ \mathcal{g}_x $.\\

Given a l.p.st. $ \mathcal{M}_{\text{loc.}} $, it determines a class of sigma fields $ \mathscr{P}^{g,x} $ and measures $ P^{g,x} $ on $ \mathscr{R} $. All of them are very imperfect, but one can try to combine the information to get an estimate of the probability of an \emph{arbitrary}\footnote{An arbitrary subset is the one we do not have any reasonable measure for.} subset $ X $ of $ \mathscr{R} $.\\

Here is a simple illustration. Let us consider only points $ x, y, z $ of $ M $ (for the sake of the example, we will just ignore all other points) and two options for the metric tensor $ g_{x1}, g_{x2} $ in $ T^{g(x)} $, $ g_{y1}, g_{y2} $ in $ T^{g(y)} $ and $ g_{z1}, g_{z2} $ in $ T^{g(z)} $. Suppose we have the measure $ P^{g,x}(g_x) $ describing the probability of a realization $ g_x $ in $ T^{g(x)} $ and in the other points alike. Further, suppose we have the detailed, global measure $ P^{g} $ (which is no problem in this discrete case). Instead of $ P^{g,x}(g_{x1}) $, we will write just $ P^{x}(g_1) $; and so on. Take e.g.
\begin{equation}
\begin{gathered}
P^{g} \left( \left \lbrace \begin{pmatrix}
g_{x1} \\
g_{y1} \\
g_{z1}
\end{pmatrix}, \begin{pmatrix}
g_{x2} \\
g_{y2} \\
g_{z2}
\end{pmatrix} \right\rbrace \right) = \\ = \text{probability of realization } g_1 \text{ or } g_2 = \\ = P^{x}(g_1) P^{y}(g_1) P^{z}(g_1) + P^{x}(g_2) P^{y}(g_2) P^{z}(g_2).
\end{gathered}
\end{equation}
We can approach this with the partial measures by computing
\begin{equation}
\begin{gathered}
P^{x} \left( \lbrace g_1 , g_2 \rbrace \right) P^{y} \left( \lbrace g_1 , g_2 \rbrace \right) P^{z} \left( \lbrace g_1 , g_2 \rbrace \right) = \\ = \left( P^{x}(g_1) + P^{x}(g_2) \right) \left( P^{y}(g_1) + P^{y}(g_2) \right) \left( P^{z}(g_1) + P^{z}(g_2) \right) = \\ = P^{x}(g_1) P^{y}(g_1) P^{z}(g_1) + P^{x}(g_2) P^{y}(g_2) P^{z}(g_2) + \text{other terms}.
\end{gathered}
\end{equation}
The conclusion is
\begin{equation}
\text{product of partial measures} \geq \text{global measure}.
\end{equation}

We remark that this may work the same for the general case of $ \mathcal{M}_{\text{loc.}} $. Let $ X $ be a subset of $ \mathscr{R} $ and let $ A_x $ be the corresponding subset of $ T^{g(x)} $. Then we define
\begin{equation}
\xi(X) = \exp \int_M \ln \left( \int_{A_x} dH \right) dV,
\end{equation}
where $ dV = \sqrt{\vert g \vert} ~ d^{4} x $. Further, we define $ \xi(X) = 0 $ whenever $ \int_{A_x}  dH = 0 $ for some $ x $. This is an estimate of the probability of a set of metric field realizations $ X $, which could give an upper bound on what we would expect to be the unknown, corresponding detailed measure on $ \mathscr{R} $.\\

Despite its relative simplicity in comparison with p.st., locally probabilistic spacetime can be useful. It is particularly convenient to work with l.p.st. when we are only interested in local information, i.e. the value of the metric tensor in some point of the manifold. We could go further and define analogues of l.p.st. for \emph{derivatives} of metric (provided they exist), which would enable us to give local probabilistic description to objects like Christoffel symbols or the Riemann and Ricci tensors, without the use of a rather more complicated p.st. The downside is that the derivatives of metric are not tensors, so we would have to do it non-covariantly. However, it is clear that the more derivatives we would locally define, the more precise (in some sense) would our description be. Once we have defined locally all the derivatives in a single point (provided all the corresponding distribution functions) of an analytic manifold, it would uniquely fix the measure on the analytic subset of $ \mathscr{R} $ via Taylor series.

\subsubsection*{Composite Probabilistic Spacetime}
Maybe the easiest way to define a p.st. is as a probabilistic \emph{composite} of classical spacetimes $ \mathcal{C} = \lbrace (M, g_j, \alpha_j) \rbrace_{j \in J } $. It is a set of spacetimes that share the manifold $ M $ but differ in metric tensor fields $ g_j $, together with the real numbers $ \alpha_j $. The character of the composite is determined by the character of the index set $ J $. If $ J $ is countable, the numbers $ \alpha_j \in [0,1] $ are the probabilities of the respective spacetimes $ (M, g_j) $. It is required that $ \sum_j \alpha_j = 1 $ and the composite is called \emph{discrete}. If the set $ J $ is homomorphic to $ \mathbb{R}^{N} $, the sum becomes an integral and we have to supply the $ N $-dimensional probability density function $ \alpha_j \equiv \alpha(j) $ whose range is $ \mathbb{R}_0^{+} $ and $ j $ stands for the $ N $-tuple $ (j_1, ..., j_N) $. It is required that $ \int \alpha(j) dj = 1 $ and the composite is called \emph{$ N $-parametric}.\\

Let $ C $ be a subset of $ \mathscr{R} $ containing the metric tensor fields of a composite $ \mathcal{C} $, and let $ \mathscr{P}^{g}_{C} $ be the perfect sigma field on it. Then $ \mathcal{C} $ together with its $ P^{g}_{C} $ determine a measure on $ \mathscr{R} $, namely $ P^{g}(X) = P^{g}_{C}(X \cap C) $\footnote{For example, if $ C $ is composed of two realizations $ g_1, g_2 $, then $ P^{g}(X) $ is equal to 1 if $ X $ contains both $ g_1 $ and $ g_2 $, 0 if it contains neither, and it is equal to the probability $ P^{g}_{C}(g_1) $ or $ P^{g}_{C}(g_2) $ if it contains only $ g_1 $ or only $ g_2 $, respectively.}. This means that especially for $ g_Z \in Z $, $ Z = \mathscr{R} \smallsetminus C $, it is $ P^{g}(\lbrace g_Z \rbrace) = 0 $ and the same for all subsets of $ Z $. In other words, we made the measure trivial on $ Z $, because it simply ignores its elements. One can easily see that if $ P^{g}_C $ is a measure on $ C $, then $ P^{g} $ is a measure on $ \mathscr{R} $. We remark that this construction is more advantageous than a similar construction using fragmentation, which would map all $ g_Z \in Z $ to one zero-measure element $ z \in S $. This would not take full advantage of the zero measure of subsets of $ Z $, since it would not enable us to measure sets containing both elements of $ C $ and (some but not all) elements of $ Z $.\\

The measure $ P^{g}_C $ can be defined in accordance with the numbers $ \alpha_j $. In the discrete case, we write
\begin{equation}
P^{g}_C(U) = \sum_{j \in J} \alpha_j \chi_{U}(g_j) \label{P_composite_disc}
\end{equation}
for all $ U $ in $ \mathscr{P}^{g}_{C} $ and $ g_j $ in $ C $. For a set $ S $, we use the characteristic function
\begin{equation} \label{char_function}
\chi_{S}(x) = \begin{cases}
    1  & \quad \text{if } ~ x \in S\\
    0  & \quad \text{otherwise}.\\
  \end{cases}
\end{equation}
Analogically for the $ N $-parametric case:
\begin{equation}
P^{g}_C(U) = \int_J \alpha(j) \chi_{U}(g_j) dj \label{P_composite_cont}.
\end{equation}

Let us check that $ P^{g}_C $ defined by (\ref{P_composite_disc}) or (\ref{P_composite_cont}) are indeed probability measures on $ C $. In the discrete case, we have the following:
\begin{enumerate}[label=\roman*., itemsep=-0.2mm]
\item For the whole $ C $ and for empty set, it holds
\begin{equation}
\begin{aligned}
& P^{g}_C(C) = \sum_j \alpha_j \chi_{C}(g_j) = \sum_j \alpha_j = 1, \\ & P^{g}_C(\varnothing) = \sum_j \alpha_j \chi_{\varnothing}(g_j) = 0.
\end{aligned}
\end{equation}
\item If $ \lbrace A_n \rbrace $ is a sequence of pairwise disjoint sets in $ \mathscr{P}^{g}_C $, then
\begin{equation}
P^{g}_C \left( \bigcup_{n = 1}^{\infty} A_n \right) = \sum_j \alpha_j \chi_{ \bigcup_{n = 1}^{\infty} A_n }(g_j) = \sum_{n = 1}^{\infty} \sum_j \alpha_j \chi_{A_n}(g_j) = \sum_{n = 1}^{\infty} P^{g}_C(A_n).
\end{equation}
\end{enumerate}
Thus, according to (\ref{def:probability_space}), $ P^{g}_C $ is a probability measure on $ C $. The same can be shown for the continuous case, which is completely analogical. The fact that one can assign probabilities to a set of spacetimes (their manifolds must be diffeomorphic!) and create a p.st. is trivial yet important, because it is a very useful technique to start with. In fact, composite probabilistic spacetime is the first, and probably the last example of p.st. for which we have a measure associated with the perfect sigma field.\\

We amend the natural definition:

\begin{definition}
We say that a p.st. $ \mathcal{M} $ is \emph{composite}, if there exists a composite $ \mathcal{C} $, such that $ \mathcal{M} $ is obtained from $ \mathcal{C} $ via (\ref{P_composite_disc}) (discrete case) or (\ref{P_composite_cont}) ($ N $-parametric case).
\end{definition}

Let us add a remark on coordinates. Given a p.st. $ \mathcal{M} = (M, \mathcal{g}) $, we may work in any coordinates in which the realizations $ g $ of $ \mathcal{g} $ can be expressed. If the p.st. is provided in the form of a composite $ \mathcal{C} = \lbrace (M, g_j, \alpha_j) \rbrace_{j \in J } $, we typically have a separate set of coordinates for every component $ (M, g_j) $. However, providing separately the metrics of these components (with their respective probabilities $ \alpha_j $) is insufficient! The composite is by definition built on a single manifold, and when speaking about the metric tensor, we should be able to determine at which point of the manifold it is defined. That is why in the case described we have to provide a unique way to assign a manifold point to the coordinate values, and if the coordinates come from two distinct spacetimes, we have to identify their manifolds. In other words, for every two components equipped with their own coordinates, we have to provide a diffeomorphism $ \phi $ mapping points of one component to points of the other. Then, we can pass between different tensor representations via the standard pushforward and pullback maps. This enables expressing all $ g $'s in the same coordinates.

\subsection{Relationship with PM Spaces}
We would like to know whether the probabilistic metric tensor $ \mathcal{g}_p $ in a point $ p $ of an l.p.st. $ \mathcal{M}_{\text{loc.}} = (M, \mathcal{g}_x) $ can be related to the metric of a PM space $ (S, \mathscr{F}, \tau) $ from Definition \ref{def:PMspace}. For that purpose, let us consider the function $ F(u,v) $ from $ T_{p} M \times T_{p} M $ into $ \Delta $ such that $ F(u,v) = F_{uv}(x) $,
\begin{equation}
F_{uv}(x) = \int \chi_{ (-\infty, x ) } \left( d(u,v) \right) ~ \mathrm{d} H, \label{Fuv}
\end{equation}
where we use the ``distance'' induced by metric,
\begin{equation}
d(u,v) \equiv \Vert u - v \Vert \equiv \sqrt{\vert g_{p}(u-v,u-v) \vert}.
\end{equation}
The integral is defined by the measure of $ \mathcal{g}_p $ (typically specified by the $ m $-distribution function $ H $), as has been discussed before, and sums over the realizations of the metric tensor $ g_p $ in point $ p $. We took advantage of (\ref{char_function}) once again. We note that since $ d(u,v) $ is a scalar, the function $ F_{uv} $ is covariant.\\

Recalling Definition \ref{def:df}, we see that $ F_{uv} $ is indeed a distribution function. It has a straightforward interpretation
\begin{equation}
F_{uv}(x) = \text{probability that} ~ d(u,v) < x. \label{F_interpretation}
\end{equation}

In the next, we assume that the probabilistic metric tensor $ \mathcal{g}_p $ is \emph{causally straight}, meaning that if $ g_{p}(a, b) \lesseqgtr 0 $ for two vectors $ a, b $ and some realization $ g_{p}$ of $ \mathcal{g}_p $, then it holds for all realizations. Under this assumption, we can take a look at a triple of vectors $ u, w, v $, such that $ g_p(v-u,v-u) \geq 0 $ and $ g_p(v-u,v-u) = 0 $ iff $ u = v $, and so forth for all couples. In this case, the triangle inequality in $ T_{p} M $ has the usual form \cite{Naber2012}
\begin{equation}
\Vert (w - u) + (v - w) \Vert \leq \Vert w - u \Vert + \Vert v - w \Vert
\end{equation}
or in other words
\begin{equation}
d(u,v) \leq d(u,w) + d(w,v) \label{triangle}
\end{equation}
and, by our assumption, it holds for all realizations $ g_p $ of $ \mathcal{g}_p $.\\

Now, the probability density function $ f_{uv} $ of the sum of two random variables described by their probability density functions $ f_{uw} $ and $ f_{wv} $ is their convolution,
\begin{equation}
f_{uv}(x) = (f_{uw} \ast f_{wv})(x) \equiv \int_{\mathbb{R}} f_{uw}(y) f_{wv}(x-y) ~ \mathrm{d}y.
\end{equation}
So, if we want to compute
\begin{equation}
F_{uw + wv}(x) = \text{probability that} ~ d(u,w) + d(w,v) < x, \label{F_+interpretation}
\end{equation}
we simply write
\begin{equation}
F_{uw + wv}(t) = \int_{-\infty}^{t} \int_{\mathbb{R}} f_{uw}(y) f_{wv}(t'-y) ~ \mathrm{d}y \mathrm{d}t' = \int_{\mathbb{R}} f_{uw}(y) F_{wv}(t-y) ~ \mathrm{d}y. \label{Fuwwv}
\end{equation}

All we have to do is compare (\ref{F_interpretation}) with (\ref{F_+interpretation}), keeping in mind (\ref{triangle}). We find out that
\begin{equation}
F_{uv}(x) \geq F_{uw + wv}(x). \label{F_triangle}
\end{equation}
This is the triangle inequality that we need in order to satisfy the definition of a triangle function. We thus define our triangle function candidate according to (\ref{Fuwwv}) as
\begin{equation}
\tau(F,G)(x) = \int_{\mathbb{R}} \frac{\mathrm{d} F(y)}{\mathrm{d}y} G(x - y) ~ \mathrm{d}y \label{eq:tau}
\end{equation}
for $ F, G \in \Delta^{+} $. We shall examine a few of $ \tau $'s properties:
\begin{enumerate}[label=\roman*., itemsep=-0.2mm]
\item Commutativity.
\begin{equation}
\begin{aligned}
\tau(F,G)(x) &= \bigg[ F(y) G(x - y) \bigg]_{-\infty}^{\infty} - \int_{\mathbb{R}} F(y) \frac{\mathrm{d} G(x - y)}{\mathrm{d}y} ~ \mathrm{d}y = \\ &= - \int_{\mathbb{R}} F(x - z) \frac{\mathrm{d} G(z)}{\mathrm{d}z}(-) ~ \mathrm{d}z = \tau(G,F)(x).
\end{aligned}
\end{equation}

\item Associativity.
\begin{equation}
\begin{aligned}
\tau(\tau(F,G),H)(x) &= \int_{\mathbb{R}} \frac{\mathrm{d}}{\mathrm{d}y} \left[ \int_{\mathbb{R}} \frac{\mathrm{d} F(z)}{\mathrm{d}z} G(y - z) ~ \mathrm{d}z \right] H(x - y) ~ \mathrm{d}y
= \\ &= \int_{\mathbb{R}} \int_{\mathbb{R}} \frac{\mathrm{d} F(z)}{\mathrm{d}z} \frac{\mathrm{d} G(t)}{\mathrm{d}t} H (x - t - z) ~ \mathrm{d}z \mathrm{d}t
= \\ &= \int_{\mathbb{R}} \frac{\mathrm{d}F(z)}{\mathrm{d}z} \int_{\mathbb{R}} \frac{\mathrm{d} G(t)}{\mathrm{d}t} H(x - t - z) ~ \mathrm{d}z \mathrm{d}t
= \\ &= \tau(F, \tau(G,H))(x).
\end{aligned}
\end{equation}

\item Non-decreasing property. Assume $ G > H $, i.e., $ G(x) > H(x) $ for all $ x $. Then
\begin{equation}
\begin{aligned}
\tau(F,G)(x) &= \int_{\mathbb{R}} \frac{\mathrm{d} F(y)}{\mathrm{d}y} G(x - y) ~ \mathrm{d}y > \\ & > \int_{\mathbb{R}} \frac{\mathrm{d} F(y)}{\mathrm{d}y} H(x - y) ~ \mathrm{d}y = \tau(F,H)(x).
\end{aligned}
\end{equation}

\item Identity embodied by $ \varepsilon_0 $ (Def. \ref{def:epsilon}).
\begin{equation}
\begin{aligned}
\tau(F,\varepsilon_0)(x) &= \int_{\mathbb{R}} \frac{\mathrm{d} F(y)}{\mathrm{d}y} \varepsilon_0(x - y) ~ \mathrm{d}y = - \int_{\mathbb{R}} F(y) \frac{\mathrm{d} \varepsilon_0(x - y)}{\mathrm{d}y} ~ \mathrm{d}y = \\ &= \int_{\mathbb{R}} F(y) \frac{\mathrm{d} \varepsilon_0(x - y)}{\mathrm{d}x} ~ \mathrm{d}y = \frac{\mathrm{d}}{\mathrm{d}x} \int_{\mathbb{R}} F(y) \varepsilon_0(x - y) ~ \mathrm{d}y = \\ &= \frac{\mathrm{d}}{\mathrm{d}x} \int_{-\infty}^{x} F(z) ~ \mathrm{d}z = F(x).
\end{aligned}
\end{equation}
\end{enumerate}

From these four points, it follows that $ \tau $ is a triangle function (Def. \ref{def:triangle_function}).\\

To recap, we have the set $ T_{p} M_{\text{pos.}} $ (a vector tangent space restricted so that $ g_p(v-u,v-u) \geq 0 $ and $ g_p(v-u,v-u) = 0 $ iff $ u = v $ for all $ u,v $ in $ T_{p} M_{\text{pos.}} $) and a function $ F(u,v) = F_{uv}(x) $ given by (\ref{Fuv}) (for a causally straight metric) from $  T_{p} M_{\text{pos.}} \times T_{p} M_{\text{pos.}} $ into $ \Delta $. Under our assumption about the vectors (positive ``distances''), $ \Delta $ can be narrowed down to $ \Delta^{+} $. The following are true:
\begin{enumerate}[label=\roman*., itemsep=-0.2mm]
\item Identical points.
\begin{equation}
F(u,u) = \int \chi_{ (-\infty, x ) } \left( 0 \right) ~ \mathrm{d} H = \varepsilon_0.
\end{equation}
\item Suppose $ u \neq v $. Then under our assumptions (and only under them!), $ d(u,v) > 0 $, and clearly
\begin{equation}
F(u,v) = \int \chi_{ (-\infty, x ) } \left( d(u,v) \right) ~ \mathrm{d} H \neq \varepsilon_0.
\end{equation}
\item By definition (and thanks to the symmetry of $ d(u,v) $), $ F(u,v) = F(v,u) $.
\item We have found a triangle function $ \tau $ such that, according to (\ref{F_triangle}),
\begin{equation}
F(u,v) \geq \tau( F(u,w), F(w,v) ).
\end{equation}
\end{enumerate}
Hence, in the sense of Definition \ref{def:PMspace}, the triple $ (T_{p} M_{\text{pos.}}, F, \tau) $ is a PM space.\\

Comparing our triangle function with that of Def. (\ref{def:triangle_function_classification}), we see that were it not for the derivative in the integrand in (\ref{eq:tau}), $ (T_{p} M_{\text{pos.}}, F, \tau) $ would be a Wald space.\footnote{In fact, Wald also describes his convolution to be the distribution function of the sum \cite{Wald1943}; this---as we believe---misinterpretation is later repeated by Schweizer and Sklar \cite{Schweizer1960}. Nevertheless, our form of the triangle function is of course just as viable as the Wald's.}\\

Keeping a causally straight metric, we could progress analogically in the case of \emph{timelike} separation. Namely, for three vectors $ u,v,w $ such that $ v-u, w-u, v-w $ are all timelike vectors of the same orientation (i.e., $ g(w-u, v-w) < 0 $), one has the Reversed Triangle Inequality\footnote{To be found in \cite{Naber2012}, p. 44.}
\begin{equation}
d(u,v) \geq d(u,w) + d(w,v)
\end{equation}
(and the equality holds iff $ w-u $ and $ v-w $ are linearly dependent). For our purposes, we multiply the inequality by $ -1 $,
\begin{equation}
- d(u,v) \leq - d(u,w) - d(w,v)
\end{equation}
which gives it the form encountered in the previous case. One would then simply take
\begin{equation}
F_{uv}(x) = \int \chi_{ (-\infty, x ) } \left( - d(u,v) \right) ~ \mathrm{d} H,
\end{equation}
and proceed the same way.\\

For other combinations of vectors, as well as for a general probabilistic metric tensor, the analysis would presumably get more complicated, but also less interesting, because the PM space is clearly not designed for the indefinite metric. At this point, we see an asset in showing the close affinity of the concept to our structures and finding their exact relation, as has been done.\\

\subsection{Test Point Particle}
Let us assume we have a well defined p.st., e.g. in the form of a composite. To gain some basic understanding of its structure and behaviour, it would be of benefit to study the kinematical properties of the p.st., that is, the movements of test particles. However, it is not a priori clear how the kinematics should actually be treated. In the following, we will look at some possibly useful generalizations of the test particle, which could be used for the analysis.\\

First of all, we want to employ a test point particle. The movement of the particle is specified by a worldline in the p.st. The complication is that this worldline, up to special cases, is not a geodesic in more than one realization of the p.st. In this sense, there are no classical geodesics. We thus need to consider either multiple worldlines associated with different realizations or an alternative definition of the geodesic. An implementation of the latter and the simplest generalization of kinematics for the p.st. is what one would aptly call a \emph{mean geodesic}. It shall be defined as a worldline $ \gamma(\lambda) $ extremizing the mean action $ \bar{S} $. Recalling the action for a particle of mass $ m $ along a segment of an arbitrary worldline parametrized by $ \lambda $,
\begin{equation}
S_g = - m \int_{\lambda_0}^{\lambda_1} \sqrt{- g_{ \mu \nu}(x) \frac{dx^{\mu}}{d\lambda} \frac{dx^{\nu}}{d\lambda} } ~ d \lambda,
\end{equation}
we can compute the mean action
\begin{equation}
\bar{S} = \int S_g ~ dG.
\end{equation}
It is then only natural to define the mean geodesic by means of the variation principle,
\begin{equation}
\delta \bar{S} = 0. \label{variation_mean_worldline}
\end{equation}

One can exploit (\ref{variation_mean_worldline}) to get the equations of motion for the mean geodesic. By a simple computation, one reproduces the usual equation of geodesic, only weighted by the probabilities of distinct realizations,
\begin{equation}
\int \left[ \frac{d^2 x^{\beta}}{d \tau^2} + \Gamma^{\beta}_{ ~ \mu \nu} \frac{d x^{\mu}}{d \tau} \frac{d x^{\nu}}{d \tau} \right] dG = 0. \label{mean_geodesic}
\end{equation}

We must not forget that the action associated with a worldline is a random variable. Its distribution function can be written down in usual manner,
\begin{equation}
F_S (x) = \int \chi_{ (-\infty, x ) } \left( S_g \right) ~ dG.
\end{equation}
Apart from the mean action $ \bar{S} $, we can compute higher moments like the variance
\begin{equation}
\sigma_S^{2} = \int (x - \bar{S})^{2} f_S(x) ~ dx,
\end{equation}
where $ f_S = \int \delta \left( S_g - x \right) ~ dG $ is the probability density function associated with $ F_S (x) $. In other words,
\begin{equation}
\sigma_S^{2} = \int (x - \bar{S})^{2} \int \delta \left( S_g - x \right) ~ dG dx = \int (S_g - \bar{S})^{2} ~ dG.
\end{equation}
These very simple characteristics can give us a hint on the spatial spread of probable worldlines. Let us define \emph{worldline within the range of variance} to be a worldline $ \gamma $ connecting two points $ A, B $ of the manifold, satisfying
\begin{equation}
\left( \bar{S}(\gamma) - \bar{S}(\text{mean geodesic}) \right)^{2} \leq \sigma_S(\text{mean geodesic})^{2}.
\end{equation}
Here, $ \bar{S}(\gamma) $ is the mean action of the worldline $ \gamma $ between $ A $ and $ B $, which is compared to the mean action of the mean geodesic (which is extremized)  between the same points. If the difference squared is smaller than or equal to the variance of the mean geodesic itself, it means that the worldline $ \gamma $ is in some sense kinematically close to being geodesic.\\

If a worldline is within the range of variance, we argue that its probability of being a geodesic should be comparable to that of the mean geodesic. Based on such criteria, we can determine the probable range of physical worldlines for the classical test particle.

\subsection{Probabilistic Particle}
The goal of this section is to introduce a generalized description of the test point particle, such that it would better fit the statistical nature of the model. This is important for the interpretation of a p.st., understanding the information carried by the probabilistic metric and the study of a p.st.'s overall kinematical properties. The probabilistic particle should be allowed to exist in many different states, weighted by a distribution of probability. In classical case, the state is given by position and velocity, i.e. a point in the tangent bundle. That is where we define the probabilistic particle.\\

For start, we will work in a spacetime $ M $. Let $ A(x,u) $ be a scalar function on the tangent bundle $ TM $. This function shall represent the probability density of a particle being found at a specified position with a specified velocity. The couple $ (x,u) $ stands for a point of $ TM $, and is given by a point $ x $ of $ M $ and the \emph{proper} velocity $ u $ of $ T_x M $. Whenever we will speak about the position $ x $ or the velocity\footnote{It would be more correct to write $ u_x $. We will use this notation where we want to stress out the affiliation of the velocity to its tangent space.} $ u $, we will mean the \emph{projections} from $ TM $ to $ M $ or $ T_x M $, respectively.\\

We have to provide a proper normalization for $ A $. At a given instance of an arbitrary coordinate time, we would like the probability of finding the particle somewhere on the spacelike hypersurface to be equal to one. We thus need to integrate over the spacelike hypersurface, and over all velocities. The former can be done in a conventional way, writing
\begin{equation}
\int_{\Sigma} A(x,u) ~ u^{\mu} d S_{\mu},
\end{equation}
where $ \Sigma $ is the hypersurface in question and $ d S_{\mu} $ is its normal volume element. The contraction with velocity establishes some kind of ``measure of presence'' of the particle, since it locally projects the velocity to $ \Sigma $. One can also think of the vector $ A(x,u) u^{\mu} $ as the probability flux, and its projection to the hypersurface establishes the probability of the particle passing through the volume element.\\

To show why this integral suits our purpose, let us give a trivial example. Consider a static probabilistic particle in the Minkowski spacetime, uniformly distributed in a unit cube $ x \in [x_A, x_B] $, where $ x_B - x_A = 1 $ and the same for $ y $ and $ z $. If we integrate the function\footnote{We suppress the velocity argument so we do not have to deal with a Dirac delta; effectively, our $ A(x) $ is already integrated over velocities. Also, to be perfectly correct, one should differentiate between the manifestly covariant function $ A(x,u) $ on the tangent bundle (or just on the manifold, as in this case) and its coordinate representation $ A(x^{\mu}, u^{\mu}) $. Here, the distinction is done by writing arguments explicitly.} $ A(x,y,z) $ over the hypersurface $ \Sigma $ given by $ t = 0 $, we get
\begin{equation}
\int_{\Sigma} A(x,y,z) ~ u^{\mu} d S_{\mu} = \int A(x,y,z) ~ (-) ~ dx ~ dy ~ dz = -1,
\end{equation}
because $ u^{\mu} = (1,0,0,0) $ and the normal to $ \Sigma $ is also $ n^{\mu} = (1,0,0,0) $, so their scalar product in the flat metric is just $ -1 $. Within the calculation, we anticipated the correct normalization $ A = 1 $ inside the cube.

Now let us choose another hypersurface, e.g. $ \tilde{\Sigma} $ given by $ t' = 0 $ in a frame boosted in $ t-x $ plane. The new frame shall be
\begin{equation}
t' = \gamma (t - vx), \qquad x' = \gamma (x - vt), \qquad y' = y, \qquad z' = z,
\end{equation}
and the velocity of the previously-static particle is $ u'^{\mu} = (\gamma, - \gamma v) $. The normal to $ \tilde{\Sigma} $ is (in the new frame) again $ \tilde{n}'^{\mu} = (1,0,0,0) $. Because the tilted hypersurface $ \tilde{\Sigma} $ is given by $ t' = \gamma (t - vx) = 0 $, it holds $ t_{cube} = v x_{cube} $ and the positions of the cube's vertices are subjected to dilatation as
\begin{equation}
x'_A = \gamma (x_A - vt) = \gamma (x_A - v^{2} x_A) = \gamma x_A (1 - v^{2}) = x_A \gamma^{-1},
\end{equation}
and the same for $ x_B $. Now, the integral is
\begin{equation}
\begin{aligned}
\int_{\tilde{\Sigma}} A(x,y,z) ~ u^{\mu} d S_{\mu} = \int A(x',y',z') ~ (- \gamma) ~ dx' ~ dy' ~ dz' = \\ = \int_{ x_A \gamma^{-1}}^{x_B \gamma^{-1}} (- \gamma) ~ dx' = - \gamma \gamma^{-1} (x_B - x_A) = - 1.
\end{aligned}
\end{equation}
The result for the tilted hypersurface is the same, which is exactly what we need.\\

To find a convenient form of the integral over velocities, let us look closer at the tangent space in a point $ p $ of $ M $. The proper velocities $ u $ lay on the \emph{upper hyperboloid} (briefly, \emph{u.h.}), which is a subset of the tangent space $ T_p M $ such that every $ u $ of the u.h. is future-directed and satisfies $ u^{\mu} u_{\mu} = -1 $. For now, let us assume that the metric in $ T_p M $ is just $ \mathrm{diag}(-1,1,1,1) $. We will define a covariant integral over u.h. as an integral over the analogical upper hyperboloid\footnote{In what follows, we will not differentiate between the two.} in Minkowski space associated\footnote{There is a canonical identification of Minkowski space with its own vector tangent spaces in every point \cite{Lee2012} and we can further identify the prototypic tangent space of Minkowski with our $ T_p M $---they have the very same structure.} with $ T_p M $. An example of such an integral is
\begin{equation}
\int_{\mathrm{u.h.}} u^{\mu} dB_{\mu}, \label{int_uh}
\end{equation}
where $ u $ is the vector associated with a point in Minkowski space and $ dB_{\mu} $ is the normal volume element of the upper hyperboloid in Minkowski space.\\

At first sight, it may not be clear whether the integral (\ref{int_uh}) diverges or not. We shall briefly inspect its behaviour. In the Cartesian coordinates of Minkowski, the hyperboloid is given by $ -t^{2} + x_1^{2} + x_2^{2} + x_3^{2} = -1 $ (we write lower indices for aesthetic reasons). Let us denote $ x_1^{2} + x_2^{2} + x_3^{2} \equiv r^{2} $, so $ t^{2} = 1 + r^{2} $ and for the u.h. we have $ t = \sqrt{1 + r^{2}} $. It may be parametrized by $ x_1^{\mathrm{u.h.}}, x_2^{\mathrm{u.h.}}, x_3^{\mathrm{u.h.}} $ equal to $ x_1, x_2, x_3 $, respectively. One easily finds out that the unit normal $ n $ to the u.h. has components
\begin{equation}
n^{0} = \frac{t^{2}}{t^{2} - r^{2}}, \qquad n^{i} = \frac{x^{i} t}{t^{2} - r^{2}}.
\end{equation}
For a velocity vector associated with a point in Minkowski space $ u $, which lies on the u.h., it follows that
\begin{equation}
\eta_{\mu \nu} u^{\mu} n^{\nu} = -t.
\end{equation}
The three-dimensional induced metric on the u.h.
\begin{equation}
h_{a b} = \eta_{\alpha \beta} \frac{\partial x^{\alpha}}{\partial x^{a}_{\mathrm{u.h.}}} \frac{\partial x^{\beta}}{\partial x^{b}_{\mathrm{u.h.}}}
\end{equation}
has components
\begin{equation}
h_{a b} = \frac{1}{t^{2}} \begin{pmatrix}
  t^{2} - x_1^{2} & - x_1 x_2  & - x_1 x_3  \\
   & t^{2} - x_2^{2} & - x_2 x_3 \\
  \mathrm{sym.} &  & t^{2} - x_3^{2}
\end{pmatrix}.
\end{equation}
Its determinant is simply
\begin{equation}
\vert h \vert = \frac{1}{t^{2}}.
\end{equation}
Now, we can write
\begin{equation}
\int_{\mathrm{u.h.}} u^{\mu} dB_{\mu} = \int u^{\mu} n_{\mu} \sqrt{\vert h \vert} ~ d^{3} x_{\mathrm{u.h.}} = \int -1 ~ d^{3} x_{\mathrm{u.h.}}. \label{velocities_volume_integral}
\end{equation}
In conclusion, we have a covariant integral over the space of proper velocities which behaves like a simple three-dimensional volume integral. Now we go back to our correspondence of $ T_p M $ with the Minkowski space, in which we can define a scalar function $ f(x) $ on the u.h. This gives a clear sense to the integral
\begin{equation}
\int_{\mathrm{u.h.}} f(u) ~ u^{\mu} dB_{\mu}.
\end{equation}
To compute it, we represent the abstract function $ f(u) $ by a function $ f(u^{\mu}) $ of the coordinate values. Then we proceed as before,
\begin{equation}
\begin{aligned}
\int_{\mathrm{u.h.}} f(u) u^{\mu} dB_{\mu} & = \int f(u^{\mu}) u^{\mu} n_{\mu} \sqrt{\vert h \vert} ~ \delta(u^{t} - u^{t}(...)) ~ d^{4} u = \\ & = \int - f(u^{\mu}) ~ \delta(u^{t} - u^{t}(...)) ~ d^{4} u.
\end{aligned}
\end{equation}
This is a prescription for Cartesian coordinates. In contrast with (\ref{velocities_volume_integral}), there is a function which generally takes all proper velocity components, including the coordinate time---we have to employ the u.h. constraint $ \delta(u^{t} - u^{t}(...)) $, where $ u^{t}(...) $ stands for $ u^{t} $ as a function of the coordinate values on the u.h.\\

We need to generalize the integral to the case of an arbitrary metric $ g_{\mu \nu}(p) $ in a point $ p $. According to the Local Flatness Theorem\footnote{It can be found in \cite{Schutz2009}, p. 149}, it is possible to choose a new basis in $ T_p M $ such that
\begin{equation}
g_{\alpha \beta}(p) = \Lambda^{\mu}_{~ \alpha}(p) \Lambda^{\nu}_{~ \beta}(p) \eta_{\mu \nu}.
\end{equation}
We note that if this was a change of coordinates---which is not really the case---the matrix would be
\begin{equation}
\Lambda^{\mu}_{~ \alpha}(p) = \frac{\partial \tilde{x}^{\mu}}{\partial x^{\alpha}}.
\end{equation}
If $ u_p^{\mu} $ are the components of the proper velocity $ u_p $ (of $ T_p M $) in arbitrary coordinate system on $ M $, then its components in the new local basis are
\begin{equation}
\tilde{u}_{p}^{\mu} = \Lambda^{\mu}_{~ \alpha}(p) u_{p}^{\alpha}.
\end{equation}
This gives us a rule for integrating a function $ f(x,u) $ on the tangent bundle over proper velocities. We have to adapt the basis of the tangent space in every point, and in return we may use the result for the integral in Minkowski spacetime:
\begin{equation}
\begin{aligned}
& \int_{\mathrm{u.h.}} f(x,u) ~ u^{\mu} dB_{\mu} = \\
& = \int - f(x^{\mu},u_{x}^{\mu}) ~ \delta(\tilde{u}_{x}^{t} - \tilde{u}_{x}^{t}(...)) ~ d^{4}\tilde{u}_x \equiv \\
& \equiv \int - \tilde{f}(x^{\mu},\tilde{u}_{x}^{\mu}) ~ \delta(\tilde{u}_{x}^{t} - \tilde{u}_{x}^{t}(...)) ~ d^{4}\tilde{u}_x = \\
& = \int - \tilde{f}(x^{\mu},\Lambda^{\mu}_{~ \alpha}(x) u_x^{\alpha}) ~ \delta( \Lambda^{t}_{~ \alpha}(x) u_x^{\alpha} - \Lambda^{t}_{~ \alpha}(x) u_x^{\alpha}(...)) ~ \left\vert \Lambda^{\nu}_{~ \beta}(x) \right\vert ~ d^{4}u_x. \label{vel_int}
\end{aligned}
\end{equation}
Here, $ f(x^{\mu},u_{x}^{\mu}) $ is the coordinate representation of $ f(x,u) $ taking the coordinate values. One takes $ u_x^{\alpha} $ for $ u_x^{\alpha}(...) $ if $ \alpha = 1,2,3 $ and $ u_x^{t} $ as a function of the others for $ u_x^{t}(...) $. And again, this form can be used only if the velocities are given in Cartesian coordinates.\\

We eventually have at our disposal the integral over proper velocities. For brevity, we will occasionally write $ dB $ instead of $ u^{\mu} dB_{\mu} $. Now, it is time to address the definition of the probabilistic particle.

\begin{definition} \label{def:ppd}
\emph{Probabilistic particle density} (briefly, \emph{p.p.d.}) is a scalar function $ A(x, u) $ on the tangent bundle $ TM $ of a spacetime manifold $ M $ with values from $ \mathbb{R}_0^{+} $, such that for every partial Cauchy surface\footnote{A partial Cauchy surface is an acausal set without an edge. For more information, we refer to \cite{Joshi1993}, p. 119.} $ \Sigma $ it satisfies the normalization condition
\begin{equation}
\int_{\Sigma} \int_{\mathrm{u.h.}} A(x, u) ~ u^{\mu} dB_{\mu} ~ u^{\nu} d S_{\nu} = 1. \label{A_completeness_relation}
\end{equation}
\end{definition}

\subsubsection*{Local Normalization Condition}
We will find a local formulation of the normalization condition. Let us consider a spacelike hypersurface $ \Sigma_0 $ and a point $ p $ in $ M $ lying on $ \Sigma_0 $. Its neighbourhood in $ \Sigma_0 $ with a characteristic dimension $ \varrho $ will be called $ N_0 $ (it may be e.g. an open ball in chosen coordinates of $ \Sigma_0 $ with the radius $ \varrho $). Now, consider another hypersurface $ \Sigma_1 $ which coincides with $ \Sigma_0 $ on $ \Sigma_0 \smallsetminus N_0 $. Analogically, we denote $ N_1 = \Sigma_1 \smallsetminus (\Sigma_0 \smallsetminus N_0) $. The illustration of the geometry is depicted in Fig. \ref{fig:surfaces}.

\begin{figure}[ht]
\centering
\begin{tikzpicture}[scale=1]
     \tikzstyle{p}=[circle,minimum size=3pt,inner sep=0pt,fill=gray]
	 \tikzstyle{edge1} = [draw,-,gray,line width=0.4pt]
	 \tikzstyle{edge2} = [draw,dashed,gray,line width=0.4pt]

	 \draw [edge2]  (-5,-1.4) -- (3,-1.4) -- (5,2) -- (-3,2) -- cycle;
	 \draw [edge1] plot [smooth cycle, tension=1] coordinates {(-2,0) (0,-0.8) (2,0) (0,0.8)};
	 
	 \draw [edge1] plot [smooth, tension=1.0] coordinates {(-1.854,0.3) (0,1.2) (1.854,0.3) };
	 
	 \draw [edge1] (-2,0) -- (0,0);
	 
	 \node[p] (P) at (0,0) {};
	 \node (C) at (0,0.3) {$ p $};
	 
     \node (D) at (0,1.45) {$ N_1 $};
	 \node (C) at (1,0) {$ N_0 $};
	 \node (A) at (2.6,-1.0) {$ \Sigma_0 $};
	 \node (R) at (-1,0.3) {$ \varrho $};

 \end{tikzpicture}
\vspace{0 mm}
\caption{Diagram of the surfaces in question.}
\label{fig:surfaces}
\end{figure}
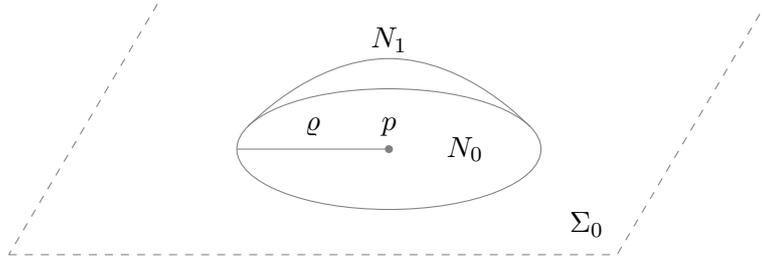

By definition,
\begin{equation}
1 = \int_{\Sigma_0} \int_{\mathrm{u.h.}} A(x,u) ~ u^{\mu} dB_{\mu} ~ u^{\nu} d S_{\nu} = \int_{\Sigma_1} \int_{\mathrm{u.h.}} A(x, u) ~ u^{\mu} dB_{\mu} ~ u^{\nu} d S_{\nu}.
\end{equation}
We split
\begin{equation}
\int_{\Sigma_0} u^{\nu} d S_{\nu} = \int_{\Sigma_0 \smallsetminus N_0} u^{\nu} d S_{\nu} + \int_{N_0} u^{\nu} d S_{\nu}
\end{equation}
(the same for $ \Sigma_1 $) and get
\begin{equation}
\int_{N_0} \int_{\mathrm{u.h.}} A(x,u) ~ u^{\mu} dB_{\mu} ~ u^{\nu} d S_{\nu} = \int_{N_1} \int_{\mathrm{u.h.}} A(x,u) ~ u^{\mu} dB_{\mu} ~ u^{\nu} d S_{\nu}. \label{equal_Ns}
\end{equation}
The region of $ M $ enclosed in between the two hypersurfaces will be called $ \Omega $, and $ \partial \Omega = \bar{N_0} \cup N_1^{-} $, where $ \bar{N_0} $ is the closure of $ N_0 $ and $ N_1^{-} $ is $ N_1 $ with switched orientation. From (\ref{equal_Ns}), it follows
\begin{equation}
\int_{\mathrm{u.h.}} u^{\mu} dB_{\mu} \int_{\partial \Omega} A(x,u) ~ u^{\nu} d S_{\nu} = 0.
\end{equation}
Using Gauss's Theorem (to be found e.g. in \cite{Wald1984}, p. 434) and simplifying, one gets
\begin{equation}
\begin{aligned}
\int_{\partial \Omega} A(x,u) ~ u^{\nu} d S_{\nu} &= \int_{\Omega} \big( A(x, u) u^{\nu} \sqrt{\vert g \vert} \big),_{\nu} ~ d^{4}x = \\ &= \int_{\Omega} \big( A(x, u) u^{\nu} \big);_{\nu} \sqrt{\vert g \vert} ~ d^{4}x = \\ &= \int_{\Omega} \big( A(x, u),_{\nu} u^{\nu} + A (x,u) \Gamma^{\nu}_{~ \nu \varrho} u^{\varrho} \big) \sqrt{\vert g \vert} ~ d^{4}x.
\end{aligned}
\end{equation}
We do not write the $ u^{\nu},_{\nu} $ term, because the velocity is our integration variable, and we do not consider it to be the function of coordinate. Maybe more correctly, we could think of $ u^{\nu} $ as a constant vector field.

So long, we have
\begin{equation}
\int_{\mathrm{u.h.}} u^{\mu} dB_{\mu} \int_{\Omega} \big( A(x, u),_{\nu} u^{\nu} + A (x,u) \Gamma^{\nu}_{~ \nu \varrho} u^{\varrho} \big) \sqrt{\vert g \vert} ~ d^{4}x = 0.
\end{equation}
Now we will proceed with the limit $ \varrho \to 0 $. It produces
\begin{equation}
\int_{\mathrm{u.h.}} \mathrm{vol} \Omega ~ \big( A(x, u),_{\nu} u^{\nu} + A (x,u) \Gamma^{\nu}_{~ \nu \varrho} u^{\varrho} \big) ~ u^{\mu} dB_{\mu} = 0,
\end{equation}
where we denote $ \mathrm{vol} \Omega = \int_{\Omega} \sqrt{\vert g \vert} ~ d^{4}x $. Dividing the equation by the (small, yet nonzero) volume of the shrinking $ \Omega $, we arrive at
\begin{equation}
\int_{\mathrm{u.h.}} \big( A(x, u),_{\nu} u^{\nu} + A (x,u) \Gamma^{\nu}_{~ \nu \varrho} u^{\varrho} \big) ~ u^{\mu} dB_{\mu} = 0. \label{local_normalization}
\end{equation}
For a flat spacetime, we get the intuitive result: the change of the p.p.d. $ A $ in the direction of $ u $, summed over all such directions, is zero. The second term may be interpreted based on the identity $ \Gamma^{\nu}_{~ \nu \varrho} = \ln \sqrt{\vert g \vert},_{\varrho} $ as thinning or thickening of $ A $ due to the expansion or contraction (respectively) of the volume element.

\subsubsection*{Kinematics}
Our next goal is to describe the kinematics of the probabilistic particle. In a classical spacetime, there is a very natural and easy way of treating this problem. We propose the following postulate:
\begin{center}
\textit{The (geometrical) density of the p.p.d. is constant along a classical geodesic.}
\end{center}
Take a geodesic $ \gamma(\tau) $ in a spacetime $ M $. Then our postulate reads
\begin{equation}
\frac{d}{d \tau} \left( \sqrt{\vert g \vert} A(x,u) \right) = 0. \label{kinematical_law}
\end{equation}
Here, the derivative is taken with respect to the proper time along the geodesic, meaning that, implicitly, $ x $ is a point given by $ \gamma(\tau) $ and $ \left. u^{\mu} = \frac{d x^{\mu}}{d \tau} \right\vert_{\gamma(\tau)} $. Taking the partial derivatives,
\begin{equation}
A(x,u) \sqrt{\vert g \vert},_{\mu} u^{\mu} + \sqrt{\vert g \vert} A(x,u),_{\mu} u^{\mu} + \sqrt{\vert g \vert} \frac{\partial A(x,u)}{\partial u^{\mu}}\frac{d u^{\mu}}{d \tau} = 0. \label{kinematical_law_in_parts}
\end{equation}
We note that the function $ A $ is subjected to a constraint of velocity normalization, which will affect the partial derivative w.r.t. velocity. In the coordinate representation, the constraint may (and later will) be implemented by the independence of $ A $ on the time-component of the velocity $ u_t $; then the derivative w.r.t. $ u_t $ is zero.\\

The kinematics is brought into the picture by the geodesic equation,
\begin{equation}
\frac{d u^{\mu}}{d \tau} + \Gamma^{\mu}_{~ \varrho \sigma} u^{\varrho} u^{\sigma} = 0. \label{geodesic}
\end{equation}
Let us assume $ \sqrt{\vert g \vert} \neq 0 $ and divide by it; then we can rewrite the equation (\ref{kinematical_law_in_parts}) using (\ref{geodesic}) in the form (using the coma notation for the partial derivative)
\begin{equation}
A(x,u) \ln \sqrt{\vert g \vert},_{\mu} u^{\mu} + A(x,u),_{\mu} u^{\mu} - A(x,u),_{u^{\mu}} \Gamma^{\mu}_{~ \varrho \sigma} u^{\varrho} u^{\sigma} = 0. \label{pp_kinematics}
\end{equation}
This is the local kinematical law for p.p.d., which is an equivalent of the geodesic equation for classical test particle. The interpretation of the terms is the following: the first term, as seen before, contributes the change of the p.p.d. when the volume element expands or contracts; the second term is a change of the p.p.d. in a given spatial direction; and the last term stands for the change in the velocity space (note that only velocities $ u_x $ from $ T_x M $ are involved, so we need not compare velocities from different tangent spaces). This equation should hold in any point of any geodesic, i.e., all $ x $ and $ u $.\\

Now we shall analyse the relationship between kinematics and normalization. Let (\ref{kinematical_law}) hold in a point $ x $ of $ M $ and again, let $ \gamma(\tau) $ be a geodesic including point $ x $, whose unit tangent vector at $ x $ is $ u $. We will start with the l.h.s. of the local normalization condition (\ref{local_normalization}) and use (\ref{pp_kinematics}),
\begin{equation}
\begin{aligned}
& \int_{\mathrm{u.h.}} \left( A,_{\nu} u^{\nu} + A \Gamma^{\nu}_{~ \nu \varrho} u^{\varrho} \right) ~ dB = \\ = & \int_{\mathrm{u.h.}} \left( A,_{u^{\mu}} \Gamma^{\mu}_{~ \varrho \sigma} u^{\varrho} u^{\sigma} - A \ln \sqrt{\vert g \vert},_{\mu} u^{\mu} + A \Gamma^{\nu}_{~ \nu \varrho} u^{\varrho} \right) ~ dB = \\ = & \int_{\mathrm{u.h.}} A,_{u^{\mu}} \Gamma^{\mu}_{~ \varrho \sigma} u^{\varrho} u^{\sigma} ~ dB.
\end{aligned}
\end{equation}
Now we will go back to the proper time derivative and rewrite it in the form of limit,
\begin{equation}
\begin{aligned}
& \int_{\mathrm{u.h.}} A(x,u),_{u^{\mu}} \Gamma^{\mu}_{~ \varrho \sigma} u^{\varrho} u^{\sigma} ~ dB = - \int_{\mathrm{u.h.}} A(x,u),_{u^{\mu}} \frac{d u^{\mu}}{d \tau} ~ dB = \\ =
& - \lim_{\delta \tau \to 0} \left( \frac{1}{\delta \tau} \int_{\mathrm{u.h.}} \Big( A(x, u_1) - A(x, u_0) \Big) ~ dB_0 \right),
\end{aligned}
\end{equation}
where $ u_1 = u_0 + \delta u $ and $ \delta u = \frac{d u^{\mu}}{d \tau} \delta \tau $. By writing $ d B_0 $, we indicate integrating over $ u_0 $. Next, we will break the integral in two and perform trivial substitution in the first one. It holds
\begin{equation}
\frac{\partial u_0^{\alpha}}{\partial u_1^{\beta}} = \frac{\partial}{\partial u_1^{\beta}} \left( u_1^{\alpha} +  \Gamma^{\alpha}_{~ \varrho \sigma} u_0^{\varrho} u_0^{\sigma} \delta \tau \right) = \delta^{\alpha}_{\beta}.
\end{equation}
We note that it is also possible to write $ u_0^{\alpha} = u_1^{\alpha} + \Gamma^{\alpha}_{~ \varrho \sigma} u_1^{\varrho} u_1^{\sigma} \delta \tau $, in which case we could simply change back, $ u_0^{\alpha} = u_1^{\alpha} + \Gamma^{\alpha}_{~ \varrho \sigma} ( u_0^{\varrho} + \delta u^{\varrho}) ( u_0^{\sigma} + \delta u^{\sigma} ) \delta \tau = u_1^{\alpha} + \Gamma^{\alpha}_{~ \varrho \sigma} u_0^{\varrho} u_0^{\sigma} \delta \tau + O(\delta \tau^{2}) $. Therefore, the Jacobian can produce additional terms of the order $ O(\delta \tau^{2}) $, which we should keep in mind. Performing the substitution, we get
\begin{equation}
\int_{\mathrm{u.h.}} A(x, u_1) ~ dB_0 = \int_{\mathrm{u.h.}} A(x, u_1) ~ dB_1 + O(\delta \tau^{2}).
\end{equation}
This is the same old integral over velocities in point $ x $, we only call the velocities by a new name $ u_1 $. This gives us
\begin{equation}
\begin{aligned}
& - \lim_{\delta \tau \to 0} \left( \frac{1}{\delta \tau} \int_{\mathrm{u.h.}} \Big( A(x, u_1) - A(x, u_0) \Big) ~ dB_0 \right) = \\ =
& - \lim_{\delta \tau \to 0} \left( \frac{1}{\delta \tau} \left( \int_{\mathrm{u.h.}} A(x, u_1) ~ dB_1 + O(\delta \tau^{2}) - \int_{\mathrm{u.h.}} A(x, u_0) ~ dB_0 \right) \right) = \\ =
&  - \lim_{\delta \tau \to 0} \frac{O(\delta \tau^{2})}{\delta \tau} = 0.
\end{aligned}
\end{equation}
To conclude, we have shown that
\begin{equation}
\int_{\mathrm{u.h.}} A(x,u),_{u^{\mu}} \Gamma^{\mu}_{~ \varrho \sigma} u^{\varrho} u^{\sigma} ~ dB = 0.
\end{equation}
Assuming (\ref{pp_kinematics}), this implies (\ref{local_normalization}). This is a satisfactory result, which proves an important property of the postulated kinematics:
\begin{center}
\textit{The kinematical law conserves normalization.}
\end{center}

\subsubsection*{Inhabiting Probabilistic Spacetime}
Up to now, we have been studying the probabilistic particle travelling through a classical spacetime. Now we would like to see how we can use our description to investigate the kinematical properties of a \emph{probabilistic} spacetime $ \mathcal{M} $.

\begin{definition}
Let $ \mathcal{M} = (M, \mathcal{g}) $ be a p.st. A hypersurface $ \Sigma $ is called \emph{exclusive partial Cauchy surface}, if it is a partial Cauchy surface for any realization $ g $ of the probabilistic metric $ \mathcal{g} $.
\end{definition}

We alter Def. \ref{def:ppd} as follows.

\begin{definition} \label{def:ppd_in_pst}
\emph{Probabilistic particle density in a probabilistic spacetime} $ \mathcal{M} = (M,\mathcal{g}) $ is a scalar function $ A(x, U) $ on $ TM $ with values from $ \mathbb{R}_0^{+} $. Let $ \Sigma $ be an exclusive partial Cauchy surface. Then there must be a function of the metric field realization $ N(g) $, such that for every $ \Sigma $, $ A $ satisfies
\begin{equation}
\int_{\Sigma} \int_{\mathrm{u.h.}} A(x, U) ~ u^{\mu} dB_{\mu} ~ u^{\nu} d S_{\nu} = N(g). \label{ppd_in_pst_completeness}
\end{equation}
\end{definition}
The interpretation of $ A(x, U) $ is that it describes the probability density of the particle to be found at point $ x $ with proper speed $ u $, which is a multiple of $ U $. Exact relation of the two velocities shall be clear in a moment.\\

We can normalize to unity in one chosen realization of the metric field; then the integral will have some, generally different, values in other realizations. The definition only states that all the values must be conserved and independent of the hypersurface. Another possibility is to adopt an additional requirement on $ N(g) $, for instance
\begin{equation}
\int N(g) ~ dG = 1,
\end{equation}
which is suggesting that the particle is ``distributed'' in between realizations.\\

Now we take a look at the velocity $ U $ in the argument of the p.p.d. One cannot use the proper velocity $ u $, since it has different meaning under different metric realizations. That is because it is normalized, each time with a different $ g $. Thus, strictly speaking, we should always say ``velocity $ u $ w.r.t. $ g $''. Now, to describe the kinematics of the particle, we would like to study evolution of $ A $ when sliding from, say, point $ x_0 $ to a close point $ x_1 = x_0 + \delta x $ in the future of $ x_0 $. We can specify $ \delta x $ (to first order) by fixing the velocity, i.e., the tangent vector of a geodesic passing through the point $ x_0 $. However, the velocity must be rescaled to fix $ \delta x $ uniquely across all realizations.\\

We progress as follows. We choose a vector $ U $ in the tangent space $ T_{x_0} M $. This vector picks a sheaf of geodesics (w.r.t. different realizations of $ g $) $ \gamma(\tau) $ such that the proper velocity $ u = \frac{d x}{d \tau} \vert_{\gamma} $ is a multiple of $ U $; say, $ U = c u $. We can also write that $ U = \frac{d x}{d y} \vert_{\gamma} $, then it is a tangent of any of the geodesics under a trivial reparametrization $ \tau = c y $. Here, $ c = c(g) $ is a real constant dependent on the metric realization (yet also the velocity!) and $ y $ is the new (universal) parameter of the geodesics. We will later study the evolution of $ A $, for which $ U $ has a universal, metric-independent meaning. One has
\begin{equation}
\delta x^{\mu} = U_0^{\mu} \delta y + O(\delta y^{2}), \label{deltax}
\end{equation}
and the change of the velocity itself is
\begin{equation}
\delta U^{\mu} = - \Gamma^{\mu}_{~ \varrho \sigma} U_0^{\varrho} U_0^{\sigma} \delta y + U_0^{\mu} \frac{d y}{d \tau} \frac{d^{2}\tau}{d y^{2}} \delta y + O(\delta y^{2}), \label{deltaU}
\end{equation}
where we have used the definition of $ U $ and the geodesic equation (\ref{geodesic}). Since our reparametrization is linear, the second term has no effect.\\

The last question we have to answer is what are the numbers $ c $. Since it holds
\begin{equation}
d \tau = \sqrt{- g \left( \frac{d x}{d y}, \frac{d x}{d y} \right) } d y,
\end{equation}
one has simply
\begin{equation}
c = \sqrt{- g \left( U,U \right) }.
\end{equation}

\newpage

\subsubsection*{Kinematics of p.p. in p.st.}
We can eventually consider the problem of kinematics in the p.st. We will see that even for the probabilistic particle, which seems to be a suitable model for such a purpose, the kinematics is still tricky to define.\\

Let us start with the following. We shall assume that the particle represented by its p.p.d. function $ A(x_0, U_0) $ on an exclusive partial Cauchy surface $ \Sigma_0 $ given by $ t_0 = const. $ and evolves independently in all the realizations for some time $ \delta t $, resulting in a collection of functions $ A_g(x_1, U_1) $ only to be composed back, forming the p.p.d. on a successive hypersurface $ \Sigma_1 $. So, in the single step $ \delta t = t_1 - t_0 $, the particle gets to choose the metric realization just \emph{once}. The described procedure results in the following prescription for the p.p.d. in place $ x_1 $ (comprising time $ t_1 $ and hypersurface-position $ x^{i}_1 $) and velocity $ U_1 $:
\begin{equation}
 A(x_1, U_1) = \frac{1}{\int \sqrt{\vert g_1 \vert} ~ dG} \int A_g(x_1, U_1) \sqrt{\vert g_1 \vert} ~ dG, \label{A1_unexpanded}
\end{equation}
where we write $ g_1 $ instead of $ g(x_1) $ for brevity. The relations between the successive positions and velocities are implicitly fixed by $ x_1 - x_0 = \delta x $ and $ U_1 - U_0 = \delta U $, respondent to (\ref{deltax}) and (\ref{deltaU}). Now, using the kinematical law for p.p.d. in classical spacetime in the form
\begin{equation}
A_g(x_1, U_1) \sqrt{\vert g_1 \vert} = A_g(x_0, U_0) \sqrt{\vert g_0 \vert} + O(\delta y^{2})
\end{equation}
yields (in this moment, we have to replace the function $ A(x,U) $ by its coordinate representation $ A(x^{\mu}, U^{\mu}) $ to be able to express the change in the argument)
\begin{equation}
\begin{aligned}
A(x_1^{\mu}, U_1^{i}) & = \frac{1}{\int \sqrt{\vert g_1 \vert} ~ dG} \int A_g(x_0^{\mu}, U_0^{i}) \sqrt{\vert g_0 \vert} ~ dG + O(\delta y^{2}) = \\
& = \frac{1}{\int \sqrt{\vert g_1 \vert} ~ dG} \int A(x_0^{\mu}, U_0^{i}) \sqrt{\vert g_0 \vert} ~ dG + O(\delta y^{2}) = \\
& = \frac{1}{\int \sqrt{\vert g_1 \vert} ~ dG} \int A(t_0, x_1^{i} - U_1^{i} \delta y, U_1^{i} + \Gamma^{i}_{~ \varrho \sigma} U_1^{\varrho} U_1^{\sigma} \delta y) \\ & \qquad \qquad \qquad \sqrt{\vert g(x_1^{\mu} - U_1^{\mu} \delta y) \vert} ~ dG + O(\delta y^{2}). \label{A1}
\end{aligned}
\end{equation}
The equality of the first and the second line of (\ref{A1}) follows from the fact that at $ (x_0^{\mu}, U_0^{i}) $, we begin with a single initial condition $ A_g(x_0^{\mu}, U_0^{i}) = A(x_0^{\mu}, U_0^{i}) $. We note that since it is $ \delta t $ and not $ \delta y $ which is considered to be constant, one should further replace $ \delta y $ by $ \tfrac{\delta t}{U_t} $ everywhere. For now, we will not do it explicitly. In the argument of $ A $, we are writing $ U^{i} $ assuming the coordinate representation of the p.p.d. to be only function of the spatial components of the velocity; the time component $ U^{0} \equiv U_t $ is fixed from normalization of $ U $.\\

Let us expand the right-hand-side to first order in $ \delta y $, or, equivalently, in $ \delta t $, and see whether we can find a prescription for the limiting case of infinitesimal time steps. We will not be expanding in the time component of the position vector, since we are interested in the relation between values of the p.p.d. in different times. It holds
\begin{equation}
\begin{gathered}
A(t_0, x_1^{i} - U_1^{i} \delta y, U_1^{i} + \Gamma^{i}_{~ \varrho \sigma} U_1^{\varrho} U_1^{\sigma} \delta y) = \\ = A(t_0, x_1^{i}, U_1^{i}) + \left( - A,_{i} U^{i} + A,_{U^{i}} \Gamma^{i}_{~ \varrho \sigma} U^{\varrho} U^{\sigma}  \right)_{(t_0,x_1^{i},U_1^{i})} \delta y + O(\delta y^{2}).
\end{gathered}
\end{equation}
Also, (and here we do expand in the time component)
\begin{equation}
\sqrt{\vert g(x_1^{\mu} - U_1^{\mu} \delta y) \vert} = \sqrt{\vert g_1 \vert} -  \sqrt{\vert g_1 \vert},_{\mu} U_1^{\mu} \delta y + O(\delta y^{2}).
\end{equation}

Plugging these into (\ref{A1}), we get
\begin{equation}
\begin{aligned}
A(x_1^{\mu}, U_1^{i}) & = \int \Big( A(t_0, x_1^{i}, U_1^{i}) + \left( - A,_{i} U^{i} + A,_{U^{i}} \Gamma^{i}_{~ \varrho \sigma} U^{\varrho} U^{\sigma}  \right)_{(t_0,x_1^{i},U_1^{i})} \delta y \Big) \\ & \qquad \qquad \frac{\sqrt{\vert g_1 \vert} -  \sqrt{\vert g_1 \vert},_{\mu} U_1^{\mu} \delta y}{\int \sqrt{\vert g_1 \vert} ~ dG} ~ dG + O(\delta y^{2}) = \\
& = \int \Bigg( A(t_0, x_1^{i}, U_1^{i}) \frac{\sqrt{\vert g_1 \vert} -  \sqrt{\vert g_1 \vert},_{\mu} U_1^{\mu} \delta y}{\int \sqrt{\vert g_1 \vert} ~ dG} + \\ &  ~ ~ ~ \left( - A,_{i} U^{i} + A,_{U^{i}} \Gamma^{i}_{~ \varrho \sigma} U^{\varrho} U^{\sigma}  \right)_{(t_0,x_1^{i},U_1^{i})} \delta y \frac{\sqrt{\vert g_1 \vert}}{\int \sqrt{\vert g_1 \vert} ~ dG} \Bigg) ~ dG + O(\delta y^{2}) = \\
& = A(t_0, x_1^{i}, U_1^{i}) \Big( 1 - \frac{\int \sqrt{\vert g_1 \vert},_{\mu} dG}{\int \sqrt{\vert g_1 \vert} ~ dG} U_1^{\mu} \delta y \Big) + \\ &  ~ ~ ~ \Big( - A,_{i} U^{i} + A,_{U^{i}} \frac{\int \Gamma^{i}_{~ \varrho \sigma} \sqrt{\vert g_1 \vert} ~ dG}{\int \sqrt{\vert g_1 \vert} ~ dG}  U^{\varrho} U^{\sigma} \Big)_{(t_0,x_1^{i},U_1^{i})} \delta y + O(\delta y^{2}). \label{AAAA}
\end{aligned}
\end{equation}

Subtracting values of $ A $ on different time slices, dividing by $ \delta y $ and letting $ \delta y \to 0 $ eventually produces (we are dropping the no-longer-necessary indices)
\begin{equation}
A,_t U_t = - A ~ \frac{\int \sqrt{\vert g \vert},_{\mu} dG}{\int \sqrt{\vert g \vert} ~ dG} U^{\mu} - A,_{i} U^{i} + A,_{U^{i}} \frac{\int \Gamma^{i}_{~ \varrho \sigma} \sqrt{\vert g \vert} ~ dG}{\int \sqrt{\vert g \vert} ~ dG}  U^{\varrho} U^{\sigma}. \label{eq:AAA}
\end{equation}
The first term after the equal sign in (\ref{eq:AAA}) comes from the second term in the bracket at line 5 of (\ref{AAAA}), the last two terms in (\ref{eq:AAA}) come from the two terms in the bracket at line 6 of (\ref{AAAA}).\\

In yet another words,
\begin{equation}
\int \Big( A \sqrt{\vert g \vert},_{\mu} U^{\mu} +  A,_{\mu} U^{\mu} \sqrt{\vert g \vert} - A,_{U^{i}} \Gamma^{i}_{~ \varrho \sigma} U^{\varrho} U^{\sigma} \sqrt{\vert g \vert} \Big) ~ dG = 0. \label{pp_pst_kinematics}
\end{equation}
This is a differential equation for the p.p.d., if the particle is allowed to choose a metric realization at each time and the respective versions of the p.p.d. obeying different realizations are summed instantly. We see that it is an analogue of the kinematical law for the p.p.d. in classical spacetime, cf. (\ref{pp_kinematics}). Disappointingly, the inevitable result of our construction is that this prescription gives mere evolution along \emph{mean geodesics}, cf. (\ref{mean_geodesic}). Thus, the probabilistic particle subjected to this kinematical law is not responding to the p.st. in any truly interesting way.\\

Let us hereby comment on the used approach. We were interested in ways to study the kinematical properties of a p.st. and came up with the probabilistic particle, which was intended to evolve w.r.t. the probabilistic metric in a way that would depict its geometrical content. However, the law that we derived intuitively does not give a better image than the notion of mean geodesic. This is, interestingly, because of the limiting procedure $ \delta y \to 0 $, which results in averaging the effect of the probabilistic geometry. If the steps are infinitesimal, the particle is not subjected to any dissolving; it behaves as if there was some mean metric. The analogy here would be with Brownian motion, where one also needs \emph{finite} steps to observe dispersion in positions of the particle.\\

If we want to observe some nontrivial behaviour of the probabilistic particle, there is a possibility of using (\ref{A1_unexpanded}) with \emph{finite} steps $ \delta t $ between the time slices. Of course, such a prescription does not have a good physical justification; and moreover, it takes an arbitrary coordinate-dependent parameter $ \delta t $, and therefore is not covariant; however, it may ultimately meet the goal of depicting the behaviour of the probabilistic metric.\\

Let us note that maybe the perfectly correct kinematical law for the probabilistic particle would involve letting the particle choose a realization at every instant but summing up only after some finite time; that is, considering all possible paths in the space of realizations and summing over these paths to get the p.p.d. on some later time slice. This approach, employing a path integral, would already resemble a quantum-mechanical computation. The technical demands for implementation, however, would be extremely high.

\section{Kinematical Examples}
In this section, we aim to demonstrate the functioning of the established formalism; and we shall do so by studying the kinematics of a probabilistic particle. We start in a couple of well-known classical spacetimes and work our way towards a p.st.

\subsection{Minkowski Spacetime}
We start with a simple case of Minkowski spacetime. For both simplicity and later use, we will be interested in a particle with spherically symmetric distribution of probability density and only radial velocity. That is why we choose to work in spherical coordinates, in which the metric has the form
\begin{equation}
ds^{2} = - dt^{2} + dr^{2} + r^{2} d \Omega^{2}, \label{Mink_spher}
\end{equation}
where $ d \Omega^{2} = d \theta^{2} + \sin^{2} \theta d \varphi^{2}  $ is the angular part. Our test particle is described by a spherically symmetric p.p.d. $ A(x,u) $ (a function on the tangent bundle), or $ A(t,r,u_r) $ (its coordinate representation). The particle is assumed to be moving only radially, e.g., the angular parts of velocity $ u^{\mu} = \left( u_{t}, u_{r},0,0 \right) $ (we write lower indices for aesthetic reasons) are zero. Furthermore, there is no dependence on $ u_t $, since we fix this component from the normalization of velocity,
\begin{equation}
u_t = \sqrt{u_r^{2} + 1}.
\end{equation}
From now on, whenever we write $ u_t $, we will implicitly mean $ u_t(u_r) $.\\

We should provide normalization for the p.p.d. Our partial Cauchy surface (or in this case, Cauchy surface) is $ t = const. $ We can use (\ref{vel_int}), however, we have to adapt it for the spherical coordinates in the space of velocities, which can be done by replacing $ d^{4}u $ by $ du_t ~ r^{2} \sin \theta ~ d u_r  d u_{\theta}  d u_{\varphi} $ (luckily, our change of coordinates does not interfere with the u.h. constraint since it involves only the spatial part, otherwise we would have to repeat the calculation for the new coordinates). We have
\begin{equation}
\begin{aligned}
& \int_{\Sigma} \int_{\text{u.h.}} A(x,u) u^{\mu} dB_{\mu} u^{\nu} d S_{\nu} = \\
= & \int_{\Sigma} \int_{\text{u.h.}} - A(t,r,u_r) ~ u^{\nu} n_{\nu} \sqrt{\vert h \vert} r^{2} \sin \theta ~ d u_t  d u_r  d u_{\theta}  d u_{\varphi} ~  dr  d \theta  d \varphi = \\
= & \int_{\Sigma} \int_{\text{u.h.}} A(t,r,u_r) ~ u_t (r^{2} \sin \theta)^{2} ~ d u_r  d u_{\theta}  d u_{\varphi} ~ dr  d \theta  d \varphi = \\
= & \int_{\Sigma} \int_{\text{u.h.}} A(t,r,u_r) ~ u_t ~ r^{4} ~ 2 \pi^{4} ~ d u_r dr. \label{Mink_normalization}
\end{aligned}
\end{equation}

Let us look at the kinematics of the probabilistic test particle. It is given by the equation (\ref{pp_kinematics}). In our case, the volume element is
\begin{equation}
\sqrt{\vert g \vert} = r^{2} \sin \theta
\end{equation}
and the relevant Christoffel symbols $ \Gamma^{r}_{~ tt}, \Gamma^{r}_{~ tr}, \Gamma^{r}_{~ rt}, \Gamma^{r}_{~ rr} $ are all zero. The equation reads
\begin{equation}
A 2 r \sin \theta u_r + \left( A,_t u_t +  A,_r u_r \right) r^{2} \sin \theta = 0. \label{Mink_kinematics}
\end{equation}
For $ r^{2} \sin \theta = 0 $ (which describes the coordinate singularities), it holds automatically; in all other cases we can divide by this factor, getting
\begin{equation}
A,_r + A,_t \frac{u_t}{u_r} = -\frac{2}{r} A.
\end{equation}
Using the handbook \cite{Polyanin2001}, we arrive at a solution
\begin{equation}
A = \frac{1}{r^{2}} ~ \phi \left( \frac{u_t}{u_r} ~ r - t \right),
\end{equation}
where $ \phi $ is an arbitrary function. Instead of $ \frac{u_t}{u_r} $, we could also write $ \sqrt{1 + \frac{1}{u_r^{2}}} $. The initial condition for $ t = 0 $ has to be of the form
\begin{equation}
A_{\text{in}} = \frac{1}{r^{2}} ~ \phi \left( \frac{u_t}{u_r} ~ r \right), \label{Mink_solution}
\end{equation}
which, due to the freedom in choosing $ \phi $, indeed covers all possible functions $ A_{\text{in}} $. We note that $ u_r $ is a parameter and as such may be inbuilt within $ \phi $.\\

Let us choose e.g. the Gaussian profile in $ r $ and $ u_r $,
\begin{equation}
A_{\text{in}} = \frac{1}{r^{2}} ~ N \exp \left( - \frac{\left( r - r_{\text{in}} \right)^{2}}{\varepsilon_r} \right) \exp \left( - \frac{\left( u_r - u_{r\text{in}} \right)^{2}}{\varepsilon_u} \right).
\end{equation}
Then the solution is
\begin{equation}
A = \frac{1}{r^{2}} ~ N \exp \left( - \frac{\left( r - \tfrac{u_r}{u_t} ~ t - r_{\text{in}} \right)^{2}}{\varepsilon_r} \right) \exp \left( - \frac{\left( u_r - u_{r\text{in}} \right)^{2}}{\varepsilon_u} \right).
\end{equation}

The constant $ N $ should be fixed from normalization. The integration would not be very nice for our solution; however, we can simplify considerably by an infinitesimal ``sharpening'' of its values. Let us take
\begin{equation}
N = \tilde{N} \frac{1}{\sqrt{\pi \varepsilon_r}} \frac{1}{\sqrt{\pi \varepsilon_u}}.
\end{equation}
Then, in the limit $ \varepsilon_r \to 0^{+} $ and $ \varepsilon_u \to 0^{+} $, our solution becomes
\begin{equation}
A = \tilde{N} ~ \delta \left( r -  \tfrac{u_r}{u_t} ~ t - r_{\text{in}} \right) ~ \delta (u_r - u_{r\text{in}} ) \label{pp_Mink_sharp}
\end{equation}
in the sense of distributions. Now, the integration is trivial:
\begin{equation}
\int_{\mathbb{R}^{+}_0} \int_{\mathbb{R}} \tilde{N} ~ \delta \left( r -  \tfrac{u_r}{u_t} ~ t - r_{\text{in}} \right) ~ \delta (u_r - u_{r\text{in}} ) ~ u_t ~ r^{4} ~ 2 \pi^{4} ~ d u_r dr = \tilde{N} u_{t \text{in}} r_{\text{in}}^{2} 2 \pi^{4}.
\end{equation}
Equating this result to 1, we can fix the constant,
\begin{equation}
\tilde{N} = \frac{1}{u_{t \text{in}} r_{\text{in}}^{2} 2 \pi^{4}}.
\end{equation}
We see, at least in this infinitesimal case, that the normalization is independent of time, as has been expected. Also, the form (\ref{pp_Mink_sharp}) is consistent with kinematics of the classical test particle, since the radial velocity is constant and
\begin{equation}
r =  r_{\text{in}} + u_r \tau, \qquad t = t_{\text{in}} + u_t \tau,
\end{equation}
so, fixing $ t_{\text{in}} = 0 $, it indeed holds
\begin{equation}
r = r_{\text{in}} + \tfrac{u_r}{u_t} ~ t .
\end{equation}

\subsection{Schwarzschild Spacetime}
The second example will be from the Schwarzschild spacetime, where we can progress similarly as in the preceding case. We start with the metric
\begin{equation}
ds^{2} = - \left( 1 - \tfrac{2 M}{r} \right) dt^{2} + \left( 1 - \tfrac{2 M}{r} \right)^{-1} dr^{2} + r^{2} d \Omega^{2}. \label{Schw}
\end{equation}
Again, we will only consider radially infalling spherically symmetric p.p.d. \break $ A(t,r,u_r) $. The velocity of the particle is $ u^{\mu} = \left( u_{t}, u_{r},0,0 \right) $ (we again write lower indices only for aesthetic reasons). The normalization of velocity is
\begin{equation}
u_t = \sqrt{\frac{u_r^{2} + \left( 1 - \tfrac{2 M}{r} \right)}{\left( 1 - \tfrac{2 M}{r} \right)^{2}}}. \label{u_norm}
\end{equation}

We shall provide normalization for the p.p.d. Let us assume that no part of the particle is under the horizon, meaning that $ A = 0 $ for all $ r \leq 2M $. Then we can study the normalization on a hypersurface $ \Sigma $ given by $ t = const. $, because it is acausal at least where the particle is passing through (otherwise there would be a problem with the sub-horizon dynamic region). The unit normal to $ \Sigma $ is
\begin{equation}
n^{\mu} = \left( (1 - \tfrac{2 M}{r} )^{-1/2}, 0,0,0 \right),
\end{equation}
and the induced 3D metric on $ \Sigma $ has the form
\begin{equation}
h_{a b} = \begin{pmatrix}
\left( 1 - \tfrac{2 M}{r} \right)^{-1} & & \\
& r^{2} & \\
& & r^{2} \sin^{2} \theta
\end{pmatrix},
\end{equation}
so $ \sqrt{\vert h \vert} = \left( 1 - \tfrac{2 M}{r} \right)^{-1/2} r^{2} \sin \theta $.\\

Projection of the velocity to the unit normal is
\begin{equation}
u^{\mu} n_{\mu} = - u_t  \left( 1 - \tfrac{2 M}{r} \right)^{1/2}.
\end{equation}

The normalization integration will be similar to that done before. However, we are not in Minkowski spacetime anymore, so to express the velocity integral, we need to find a local basis in which the metric is locally the same as the metric of Minkowski spacetime (in our coordinates). The transformation is of the form
\begin{equation}
g_{\text{S} \alpha \beta}(x) = \Lambda^{\mu}_{~ \alpha}(x) \Lambda^{\nu}_{~ \beta}(x) g_{\text{M} \mu \nu}
\end{equation}
and the solution for the matrices is
\begin{equation}
\Lambda^{0}_{~ 0} = \left( 1 - \tfrac{2 M}{r} \right)^{1/2}, \qquad \Lambda^{1}_{~ 1} = \left( 1 - \tfrac{2 M}{r} \right)^{-1/2}, \qquad \text{others are zero.}
\end{equation}
This is nice, since the determinant $ \vert \Lambda \vert = 1 $.\\

Now we can compute the integral. We define a new function of the coordinate values $ \tilde{A} $ via $ A(t,r,u_t,u_r) = \tilde{A}(t,r,\tilde{u}_t,\tilde{u}_r) $, where $ \tilde{u}_t,\tilde{u}_r $ are the components in the coordinate basis which locally gives the metric the form of the metric in Minkowski spacetime, $ \tilde{u}^{\mu} = \Lambda^{\mu}_{~ \alpha} u^{\alpha}$. Then, using the prescription (\ref{vel_int}) as before, we write
\begin{equation}
\begin{aligned}
& \int_{\Sigma} \int_{\text{u.h.}} A(x,u) u^{\mu} dB_{\mu} u^{\nu} d S_{\nu} = \\
& = \int_{\Sigma} \int_{\text{u.h.}} - \tilde{A}(t,r,\tilde{u}_t,\tilde{u}_r) ~ \delta \left( \tilde{u}_t - \tilde{u}_t(r, \tilde{u}_r) \right) ~ u^{\nu} n_{\nu} \sqrt{\vert h \vert} \\ & \qquad \qquad \qquad \qquad \qquad r^{2} \sin \theta ~ d u_t  d u_r  d u_{\theta}  d u_{\varphi} ~  dr  d \theta  d \varphi = \\
& = \int_{\Sigma} \int_{\text{u.h.}} - A(t,r,u_t,u_r) ~ \delta \left( u_t - u_t(r,u_r) \right) \vert 1 - \tfrac{2 M}{r}  \vert^{-1/2} ~ u^{\nu} n_{\nu} \sqrt{\vert h \vert} \\ & \qquad \qquad \qquad \qquad \qquad r^{2} \sin \theta ~ d u_t  d u_r  d u_{\theta}  d u_{\varphi} ~ dr  d \theta  d \varphi = \\
& = \int_{\Sigma} \int_{\text{u.h.}} - A(t,r,u_t,u_r) ~ \vert 1 - \tfrac{2 M}{r}  \vert^{-1/2} ~ u^{\nu} n_{\nu} \sqrt{\vert h \vert} \\ & \qquad \qquad \qquad \qquad \qquad r^{2} \sin \theta ~  d u_r  d u_{\theta}  d u_{\varphi} ~ dr  d \theta  d \varphi = \\
& = \int_{\Sigma} \int_{\text{u.h.}} - A(t,r,u_t,u_r) ~ \vert 1 - \tfrac{2 M}{r}  \vert^{-1/2} (-) u_t  \left( 1 - \tfrac{2 M}{r} \right)^{1/2} \\ & \qquad \qquad \left( 1 - \tfrac{2 M}{r} \right)^{-1/2} r^{2} \sin \theta ~ r^{2} \sin \theta ~  d u_r  d u_{\theta}  d u_{\varphi} ~ dr  d \theta  d \varphi = \\
& = \int_{\Sigma} \int_{\text{u.h.}} A(t,r,u_t,u_r) ~ \vert 1 - \tfrac{2 M}{r}  \vert^{-1/2} ~ u_t  \\ & \qquad \qquad \qquad \qquad \qquad (r^{2} \sin \theta)^{2} ~ d u_r  d u_{\theta}  d u_{\varphi} ~ dr  d \theta  d \varphi = \\
& = \int_{\Sigma} \int_{\text{u.h.}} A(t,r,u_t,u_r) ~ \vert 1 - \tfrac{2 M}{r}  \vert^{-1/2} ~ u_t ~ r^{4} ~ 2 \pi^{4} ~ d u_r dr. \label{Schw_normalization}
\end{aligned}
\end{equation}

Once again, let us choose a sharp value of the initial radius and radial velocity at $ t = 0 $,
\begin{equation}
A(0,r,u_t,u_r) \equiv A_{\text{in}} = \tilde{N} \delta(r - r_{\text{in}}) \delta(u_r - u_{r \text{in}}). \label{A0_choice}
\end{equation}
Then we can evaluate the integral explicitly,
\begin{equation}
\int_{\Sigma} \int_{\text{u.h.}} A(x,u) u^{\mu} dB_{\mu} u^{\nu} d S_{\nu} = 2 \pi^{4} \tilde{N} u_{t \text{in}} r_{\text{in}}^{4} \left( 1 - \tfrac{2 M}{r_{\text{in}}} \right)^{-1/2},
\end{equation}
where $ u_{t \text{in}} = u_{t \text{in}}(r_{\text{in}}, u_{r \text{in}}) $ comes from (\ref{u_norm}) and we do not write the absolute value since we assume to be above horizon ($ r_{\text{in}} > 2M $) anyway. Let us normalize the p.p.d to 1. Then, for our choice (\ref{A0_choice}), we are getting the value of the constant,
\begin{equation}
\tilde{N} = \frac{ \left( 1 - \tfrac{2 M}{r_{\text{in}}} \right)^{1/2}}{2 \pi^{4} u_{t \text{in}} r_{\text{in}}^{4} }
\end{equation}
If the particle is static in the beginning, $ u_{r \text{in}} = 0 $, this simplifies to
\begin{equation}
\tilde{N} = \frac{\left( 1 - \tfrac{2 M}{r_{\text{in}} } \right)}{2 \pi^{4} r_{\text{in}}^{4} }.
\end{equation}

Let us write down the kinematical law and see what we can infer from it. We start with the equation (\ref{pp_kinematics}). We will need the volume element, which is
\begin{equation}
\sqrt{\vert g \vert} = r^{2} \sin \theta.
\end{equation}
The Christoffel symbols of interest (concerning our symmetry) are
\begin{equation}
\Gamma^{r}_{~ tt} = \left( 1 - \tfrac{2M}{r} \right) \tfrac{M}{r^{2}}, \qquad \Gamma^{r}_{~ tr} = \Gamma^{r}_{~ rt} = 0, \qquad \Gamma^{r}_{~ rr} = - \left( 1 - \tfrac{2M}{r} \right)^{-1} \tfrac{M}{r^{2}}. \label{Christoffels}
\end{equation}
Plugging these into (\ref{pp_kinematics}), one obtains
\begin{equation}
\begin{aligned}
& A 2 r \sin \theta u_r + \left( A,_t u_t + A,_r u_r \right) r^2 \sin \theta \\ & - A,_{u_r} r^2 \sin \theta ~ \frac{M}{r^{2}} \left( \left( 1 - \tfrac{2 M}{r} \right) u_t^{2} - \left( 1 - \tfrac{2 M}{r} \right)^{-1} u_r^{2} \right) = 0. \label{Schw_kinematics_orig}
\end{aligned}
\end{equation}
For $ \sin \theta = 0 $, the equation holds. In all other cases, we divide by this factor. Further, thanks to (\ref{u_norm}), it holds
\begin{equation}
\left( \left( 1 - \tfrac{2 M}{r} \right) u_t^{2} - \left( 1 - \tfrac{2 M}{r} \right)^{-1} u_r^{2} \right) = 1,
\end{equation}
which grants a welcomed simplification. We thus have
\begin{equation}
A 2 r u_r + \left( A,_t u_t + A,_r u_r \right) r^2 - A,_{u_r} M = 0. \label{Schw_kinematics_simple}
\end{equation}
One can find a solution to \ref{Schw_kinematics_simple} just as we did\footnote{We assumed $ A = e^{t Q(r,u_r)}\phi(r,u_r) $ and obtained the forms of the functions $ Q = \psi(\eta - \xi) $ and $ \phi = e^{I(\xi,\eta - \xi)\psi(\eta - \xi)} \varphi(\eta - \xi) $, where $ \eta = \frac{1}{r} $ and $ \xi = \frac{u_r^{2}}{2M} $. The functions $ \psi $ and $ \varphi $ are arbitrary, whereas
\begin{equation}
\begin{aligned}
I(\xi,\eta - \xi) = 2 \ln \xi - 4M \sqrt{\eta - \xi + \tfrac{1}{2M}} \Bigg[ & \frac{\sqrt{\eta}}{2M \xi (\eta-\xi)}-\frac{2 \tanh^{-1}\left(\frac{\sqrt{\eta}}{\sqrt{\eta - \xi + \frac{1}{2M}}}\right)}{\sqrt{\eta - \xi + \frac{1}{2M}}} + \\ & + \frac{(2 \eta- 2 \xi - \frac{1}{2M}) \tanh ^{-1}\left(\frac{\sqrt{\eta}}{\sqrt{\eta - \xi}}\right)}{(\eta - \xi)^{3/2}} \Bigg].
\end{aligned}
\end{equation}
This is just one special case, and we suspect that it cannot be normalized.}, but it is less easy to come up with a solution for a reasonable (most importantly, normalizable) initial condition. Nevertheless, from the conceptual point of view, such a solution would not be much more interesting than the preceding solution for Minkowski spacetime, since it would again represent some superposition of classical geodesic motions.

\subsection{Spherical Shell}
Now we join the preceding two cases and study the probabilistic particle around a thin massive spherical shell; a setting which will be convenient for our line of thought. We suppose that the shell is of constant radius and lies above its own horizon, so the spacetime is everywhere static. The probabilistic particle will be attracted towards the shell only until it enters the inner flat region. We assume that the test particle does not interact in any way with the shell. Since we again consider only radial motion, i.e. the motion normal to the surface of the shell, the metric induced on the shell will have no effect on the kinematics of the test particle. Effectively, the Christoffel symbols only switch from one set of values to another.\\

Let us denote the coordinates in the inner flat region, whose metric is (\ref{Mink_spher}), by $ T,R $ to differentiate them from the coordinates of the outer Schwarzschild region with the metric (\ref{Schw}). Let $ R = \varrho $ be the radius of the shell, then, so as to have the metric continuous, we shall put $ r = \varrho $ on the shell as well. Both metrics are static, so the time coordinates can be shifted arbitrarily, and we choose $ t = T $ on the shell.\\

For normalization, we can choose the hypersurface $ \Sigma $ given by $ t = T = const. $ Thanks to the equality of time coordinates on the shell, $ \Sigma $ is well defined by this choice. The normalization integral is given by (\ref{Mink_normalization}) for the inner and by (\ref{Schw_normalization}) for the outer region.\\

The kinematic equations are also known to us from the preceding considerations; they are given by (\ref{Mink_kinematics}) for the inner and by (\ref{Schw_kinematics_simple}) for the outer region. We point out one detail which has not been important so far but will be necessary here: the radially infalling particle will, at some point, reach $ r = 0 $, and it should pass the origin and continue its movement on the other side. To cover such scenario by our description, we need to flip the sign of $ u_R $ and $ u_r $ once the particle passes through the origin. This can be arranged by means of analytic continuation to include $ r < 0 $ and $ R < 0 $. The equation for inner p.p.d. in this continuation stays unchanged,
\begin{equation}
A^{\mathrm{in}} 2 R u_R + \left( A^{\mathrm{in}},_T u_T +  A^{\mathrm{in}},_R u_R \right) R^{2} = 0, \label{Mink_kinematics_continued}
\end{equation}
but in the equation for outer p.p.d., we have to alter the sign of the attractive term, reading
\begin{equation}
A^{\mathrm{out}} 2 r u_r + \left( A^{\mathrm{out}},_t u_t + A^{\mathrm{out}},_r u_r \right) r^2 - \text{sgn}(r) A^{\mathrm{out}},_{u_r} M = 0. \label{RSS_kinematics_continued}
\end{equation}

Now, the setting that we chose for this paragraph is designated to allow for a \emph{static} solution, meaning $ A^{\mathrm{in}},_T = A^{\mathrm{out}},_t = 0 $. There are two reasons why one may be concerned about such a solution: first, it represents an interesting state of the particle which does not have an analogy in the classical case (where the particle in the considered scenario simply cannot be static) and remotely resembles the state of a quantum particle; second, the equations simplify considerably when the time derivatives are set to zero. This could not be done in either of the preceding cases, since the static solutions there are either trivial or non-normalizable.\\

In the static case, the inner solution (\ref{Mink_solution}) simplifies to
\begin{equation}
A^{\mathrm{in}} = \frac{C(u_R)}{R^{2}}.
\end{equation}
The outer solution is yet to be found. With the help of \cite{Polyanin2001}, one easily gets
\begin{equation}
A^{\mathrm{out}} = \frac{\phi(x)}{r^{2}}, \qquad \text{where } x = u_r^{2} - \tfrac{2M}{\vert r \vert},
\end{equation}
where $ \phi $ is an arbitrary function. It is worth noticing that $ x $, if multiplied by half of the mass of the test particle $ m/2 $, resembles its total energy. It would indeed seem reasonable if the static solution was (letting aside the geometrical factor $ r^{-2} $) a function of energy, which is conserved.\\

At first sight, the outer solution is quite worrying: since $ x \to u_r^{2} $ as $ r \to \pm \infty $, the p.p.d. decreases only as fast as $ r^{-2} $ and the normalization integral diverges badly. However, this problem can be resolved if we arrange for $ \phi(x) $ to become zero in some \emph{finite} value of the radius $ r $. Well, let us take for example
\begin{equation}
\phi(x) = D(x) ~ \chi_{[-d,d]} \left( x + \tfrac{2M}{r_{\mathrm{in}}} \right), \label{phi_RS}
\end{equation}
where $ D(x) $ is an arbitrary function and $ \chi_Y(y) $ is the characteristic function of the interval $ Y $ (giving 1 for $ y \in Y $ and 0 otherwise). We assume $ d $ and $ r_{\mathrm{in}} $ to be positive real numbers. Then the characteristic function is non-zero when
\begin{equation}
u_r^{2} - \tfrac{2M}{\vert r \vert} + \tfrac{2M}{r_{\mathrm{in}}} \leq d \qquad \text{and} \qquad u_r^{2} - \tfrac{2M}{\vert r \vert} + \tfrac{2M}{r_{\mathrm{in}}} \geq - d.
\end{equation}
From the first inequality it follows that the greatest value of $ \vert r \vert $, for which $ \phi(x) $ might be non-vanishing, is when $ u_r = 0 $ and
\begin{equation}
d = \tfrac{2M}{\vert r \vert} - \tfrac{2M}{r_{\mathrm{in}}}.
\end{equation}
All greater radii result in the p.p.d. vanishing, which rules out the divergence of the normalization integral at $ r \to \pm \infty $. The second inequality is not so restrictive; it may be viewed as a condition on $ u_r $.\\

For simplicity, let us take $ D = const. $ Having the inner and the outer solution, we must fix the rest of the freedom so that the solution is continuous on the shell. That means
\begin{equation}
C(u_R) = D ~ \chi_{[-d,d]} \left( u_r^{2} - \tfrac{2M}{\varrho} + \tfrac{2M}{r_{\mathrm{in}}} \right).
\end{equation}
Before ensuring this equality, we have to determine what happens with the velocity once the particle enters the inner region. Note that the velocity \emph{must change}, since its normalization changes due to the change in the metric. We argue that it is exactly just the normalization which changes, i.e., the inner velocity (as a vector) is rescaled to become a multiple of the outer velocity after the particle passes through the shell. In fact, one cannot think of any other covariant way to relate the two. So, all we have to do is re-normalize the velocity, which results in
\begin{equation}
u_R = \frac{a u_r}{\sqrt{u_r^{2} + b}}, \qquad \text{where } a = \frac{\left( 1 - \tfrac{2M}{\varrho} \right)}{\sqrt{1 - \left( 1 - \tfrac{2M}{\varrho} \right)^{2}}}, ~ b = \frac{\left( 1 - \tfrac{2M}{\varrho} \right)}{1 - \left( 1 - \tfrac{2M}{\varrho} \right)^{2}}.
\end{equation}
From here, we can express $ C(u_R) $ as a sum of characteristic functions (the function $ u_R(u_r) $ is monotonous, but we need two intervals since $ u_r $ can be either positive or negative). Denoting
\begin{equation}
\Delta_{\overset{1}{0}} = \frac{a \sqrt{\pm d + \tfrac{2M}{\varrho} - \tfrac{2M}{r_{\mathrm{in}}}}}{\sqrt{\pm d + \tfrac{2M}{\varrho} - \tfrac{2M}{r_{\mathrm{in}}} + b}},
\end{equation}
we can write
\begin{equation}
C(u_R) = D \left( \chi_{[\Delta_0,\Delta_1]} ( u_R ) + \chi_{[-\Delta_1,-\Delta_0]} ( u_R ) \right).
\end{equation}
By this, we have provided a full solution up to a normalization constant $ D $.\\

As usual, let us simplify the p.p.d. by a limiting procedure. If we take
\begin{equation}
D = \frac{\tilde{D}}{2 d}
\end{equation}
and let $ d \to 0 $, we get a Dirac delta representation. In the outer region,
\begin{equation}
A^{\mathrm{out}} = \frac{\tilde{D}}{r^{2}} ~ \delta \left( u_r^{2} - \tfrac{2M}{\vert r \vert} + \tfrac{2M}{r_{\mathrm{in}}} \right).
\end{equation}
In the inner region, the solution simplifies to
\begin{equation}
A^{\mathrm{in}} = \frac{\tilde{D}}{2 R^{2}} \frac{ab}{\sqrt{\tfrac{2M}{\varrho} - \tfrac{2M}{r_{\mathrm{in}}}} \left( \tfrac{2M}{\varrho} - \tfrac{2M}{r_{\mathrm{in}}} + b \right)^{3/2} } \Big( \delta (u_R - \Delta_{01}) + \delta (u_R + \Delta_{01}) \Big),
\end{equation}
where $ \Delta_{01} $ is $ \Delta_{0 \text{ or } 1} $ in which one puts $ d = 0 $.

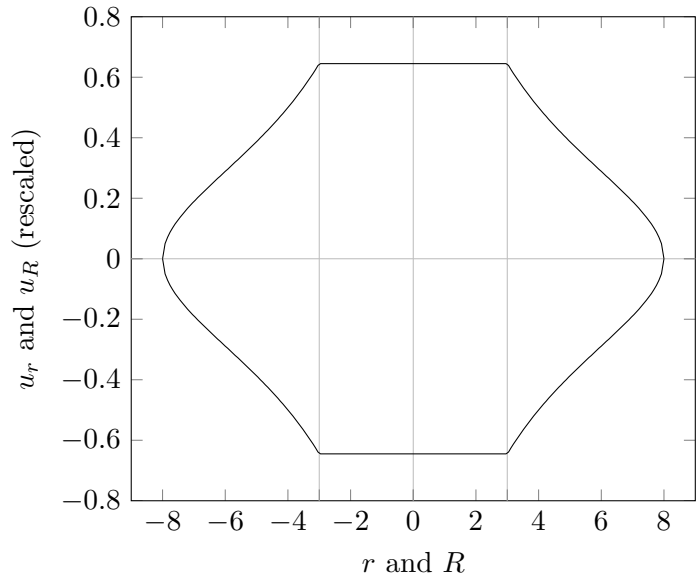
\begin{figure}[ht]
\centering
\begin{tikzpicture}
  \begin{axis}[
	width=9 cm,
	height=8 cm,
    xmin=-9,xmax=9,
    ymin=-0.8,ymax=0.8,
    xlabel=$ r \text{ and } R $,
    ylabel={$ u_r \text{ and } u_R \text{ (rescaled)} $},
    ylabel near ticks,
    xlabel near ticks,
    	/pgf/number format/.cd,
    	1000 sep={},
    extra y ticks       = 0,
    extra y tick labels = ,
    extra y tick style  = { grid = major },
    extra x ticks       = {-3,0,3},
    extra x tick labels = ,
    extra x tick style  = { grid = major },
    ]
  
    
  \addplot[black] table {plots/RSSdelta1.txt};
  \addplot[black] table {plots/RSSdelta1neg.txt};
 
  \end{axis}
\end{tikzpicture}
\vspace{-2 mm}
\caption{The $ \delta $-limit of the p.p.d. plotted as $ u_r(r) $ and $ u_R(R) $ in the outer and the inner region, respectively. We have rescaled $ u_R $ so the values appear continuous, in fact $ \vert u_R \vert $ is smaller due to a different normalization. We have set $ M = 1 $, $ \varrho = 3 $, $ r_{\mathrm{in}} = 8 $. }
\label{fig:shell}
\end{figure}

The non-zero values of the p.p.d. in the radius-velocity plane are plotted in Fig.~\ref{fig:shell}. We obtained a full orbit, or a closed path of a classical test particle which falls towards and through the shell, leaves the interior on the other side, reaches the turning point and returns back to the initial position. The probabilistic particle, however, is everywhere on the orbit, at all times, with the same probability density, so the p.p.d is static.\\

Now it would be the time to fix the constant $ \tilde{D} $, which can be expressed by means of $ M, \varrho $ and $ r_{\text{in}} $. The calculation is similar to that in Minkowski spacetime, with the difference that there is one Dirac delta less now, so the integration is more complicated. Since the result is not in any way interesting, we are not giving it here.

\subsection{Random Shell}
In the preceding examples, we have studied the kinematics of the probabilistic particle in classical spacetimes. Now we will take advantage of the previous inspections and generalize the spacetime around the spherical thin massive shell to a probabilistic case. The p.st. which we want to introduce should describe the spherical geometry around an \emph{indefinite} thin-shell source, meaning that instead of a fixed value of the shell radius $ \varrho $, we will consider a random variable described by a probability density function $ f(\varrho) $, satisfying
\begin{equation}
\int_{2M}^{\infty} f(\varrho) d \varrho = 1.
\end{equation}
To rule out the possibility of appearance of the horizon (as before), we assume that for $ \varrho \leq 2M $, $ f(\varrho) = 0 $.\\

Our p.st. will be defined as a composite of well-known classical spacetimes. To simplify the problem, we will not consider the 1-parametric composite of spherical shell spacetimes (each of which would be composed of the inside and the outside region) parametrized by $ \varrho $, but rather a mere 2-component discrete composite of Minkowski (coordinates $ T,R $) and Schwarzschild (coordinates $ t,r $) spacetimes, letting their probabilities to be functions of $ \varrho $. This is possible, since the metric in any point is either that of Minkowski or that of Schwarzschild anyway. Now, to define a composite, we have to provide a mapping of the respective coordinates in which the metrics are given (see the end of paragraph 2.1). We choose a mapping\footnote{A physical interpretation of such a mapping is yet missing.} given by $ t = T $ and $ r = R $, which is consistent, since it grants that the metric of any realization is continuous on the shell, no matter its radius. To complete the set-up of the p.st., we provide the probabilities $ P_{\text{S}} $ and $ P_{\text{M}} $ of the Schwarzschild and the Minkowski component, respectively. A slightly nontrivial aspect is that they are functions of the radius:
\begin{equation}
P_{\text{S}}(r) = \int_{2M}^{r} f(\varrho) d \varrho, \qquad P_{\text{M}} = 1 - P_{\text{S}}.
\end{equation}
The p.st. is now defined which automatically gives definition to the integral $ \int dG $. In case of a discrete composite, it is a mere sum over realizations weighted by their probabilities. For a function $ h(g) $ of a metric field realization,
\begin{equation}
\int h(g) dG = h(g_{\text{M}}) P_{\text{M}} + h(g_{\text{S}}) P_{\text{S}}. \label{int_dG_RS}
\end{equation}

Let us establish the description of the probabilistic particle. As before, it shall be spherically symmetric and infalling, leaving the coordinate representation of the p.p.d. in the form $ A(t,r,U_r) $. Here, the vector $ U $ can be chosen equal to the velocity in either the Minkowski or the Schwarzschild realization; we shall make that choice in a moment.\\

The kinematical law for the probabilistic particle in a p.st. is given by (\ref{pp_pst_kinematics}). Plugging in the geometrical quantities computed in preceding paragraphs and using (\ref{int_dG_RS}), we rewrite the equation in the form
\begin{equation}
\begin{aligned}
& \Big[ A 2 r \sin \theta U_r + \left( A,_t U_t + A,_r U_r \right) r^2 \sin \theta \\ & - A,_{U_r} r^2 \sin \theta ~ \frac{M}{r^{2}} \left( \left( 1 - \tfrac{2 M}{r} \right) U_t^{2} - \left( 1 - \tfrac{2 M}{r} \right)^{-1} U_r^{2} \right) \Big] P_{\text{S}}\\
+ & \Big[ A 2 r \sin \theta U_r + \left( A,_t U_t + A,_r U_r \right) r^2 \sin \theta \Big] P_{\text{M}} = 0. \label{RSS_kinematics_orig}
\end{aligned}
\end{equation}
In the coordinate singularities $ \sin \theta = 0 $, the equation holds. In all other cases, we divide by this factor. As we already know, if $ U $ is normalized according to (\ref{u_norm}), it holds
\begin{equation}
\left( \left( 1 - \tfrac{2 M}{r} \right) U_t^{2} - \left( 1 - \tfrac{2 M}{r} \right)^{-1} U_r^{2} \right) = 1.
\end{equation}
Thus, we can simplify considerably by fixing $ U $ to be the velocity in the \break Schwarzschild realization. Also, thanks to $ P_{\text{M}} + P_{\text{S}} = 1 $, we can add up the identical terms, getting
\begin{equation}
A 2 r U_r + \left( A,_t U_t + A,_r U_r \right) r^2 - A,_{U_r} M P_{\text{S}}(r) = 0. \label{RSS_kinematics_half}
\end{equation}
This equation is almost as in the Schwarzschild spacetime, cf. (\ref{Schw_kinematics_simple}), the only difference being the presence of the probability $ P_{\text{S}}(r) $.\\

What we can hope for is to find a static solution, just as before. And just as before, we first have to allow the probabilistic particle to pass the origin. We employ the analytic continuation to include $ r < 0 $ and alter the equation to read
\begin{equation}
A 2 r U_r + \left( A,_t U_t + A,_r U_r \right) r^2 - \text{sgn}(r) A,_{U_r} M P_{\text{S}}(\vert r \vert) = 0. \label{RSS_kinematics}
\end{equation}

Assuming $ A,_t = 0 $, we begin the search for the static solution with
\begin{equation}
A 2 r U_r + A,_r U_r r^2 - \text{sgn}(r) A,_{U_r} M P_{\text{S}}(\vert r \vert) = 0.
\end{equation}
One finds that the solution is of the form
\begin{equation}
A = \frac{\phi(x)}{r^{2}}, \qquad x = U_r^{2} + 2 M \int \frac{\text{sgn}(r) P_{\text{S}}(\vert r \vert)}{r^{2}} dr \label{A_static_RSS}
\end{equation}
where $ \phi $ is an arbitrary function. In the latter, instead of the integral in (\ref{A_static_RSS}), we shall write $ I(r) $.\\

To proceed with the example, let us take an illustrative p.d.f. for the shell radius,
\begin{equation}
f(\varrho) = \frac{\chi_{[\varrho_{0},\varrho_{1}]}(\varrho)}{(\varrho_{1} - \varrho_{0})}, \label{random_shell_uniform}
\end{equation}
where $ 2M < \varrho_{0} < \varrho_{1} $. Then we have
\begin{equation}
I(r) = \begin{cases}
 -\frac{\frac{ \varrho_{0} }{ \varrho_{1} }+\ln
   \left(\frac{ \varrho_{1} }{ \varrho_{0} }\right)-1}{ \varrho_{1} - \varrho_{0} }-\frac{
   1}{ \varrho_{1} } & \left| r\right| < \varrho_{0}  \\
 \frac{\frac{ \varrho_{0} }{\left| r\right| }+\ln \left(\frac{\left| r\right|
   }{ \varrho_{0} }\right)}{ \varrho_{1} - \varrho_{0} }-\frac{\frac{ \varrho_{0} }{\varrho_{1}}+\ln
   \left(\frac{ \varrho_{1} }{ \varrho_{0} }\right)}{ \varrho_{1} - \varrho_{0} }-\frac{1}
   { \varrho_{1} } & \varrho_{0} \leq \vert r \vert \leq \varrho_{1}  \\
 -\frac{1}{\left| r\right| } & \left| r\right| > \varrho_{1} 
\end{cases}
\end{equation}
The integration constant in the region $ \left| r\right| > \varrho_{1}  $ has been set to zero (which does not matter, since this term enters as an argument to an arbitrary function), in the other two regions, it has been set so that the function as a whole is continuous.\\

Now we can consider the previously used form of the solution (\ref{phi_RS}) and again assume the simplest case $ D = const. $ So far, our p.p.d. is
\begin{equation}
A = \frac{D}{r^{2}} ~ \chi_{[-d,d]} \left( U_r^{2} + 2 M I(r) + \tfrac{2M}{r_{\mathrm{in}}} \right).
\end{equation}
Setting $ D = \tilde{D}/(2d) $ and performing the limit $ d \to 0 $ as before, we again reach the sharp solution
\begin{equation}
A = \frac{\tilde{D}}{r^{2}} ~ \delta \left( U_r^{2} + 2 M I(r) + \tfrac{2M}{r_{\mathrm{in}}} \right).
\end{equation}

This is a limiting static solution for the probabilistic particle, directly analogous to the mean geodesic. We plot this solution together with the solutions for thin shells at $ \varrho_{0} $ and $ \varrho_{1} $, as obtained in the preceding paragraph, in Fig.~\ref{fig:random_shell_sharp}. The plot illustrates the averaging and smoothing behaviour of the kinematical equation used.\\

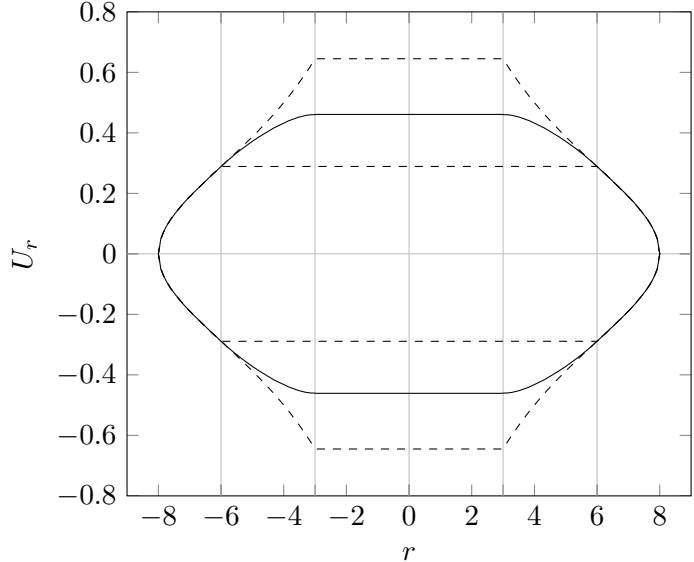
\begin{figure}[ht]
\centering
\begin{tikzpicture}
  \begin{axis}[
	width=9 cm,
	height=8 cm,
    xmin=-9,xmax=9,
    ymin=-0.8,ymax=0.8,
    xlabel=$ r $,
    ylabel={$ U_r $},
    ylabel near ticks,
    xlabel near ticks,
    	/pgf/number format/.cd,
    	1000 sep={},
    extra y ticks       = 0,
    extra y tick labels = ,
    extra y tick style  = { grid = major },
    extra x ticks       = {-6,-3,0,3,6},
    extra x tick labels = ,
    extra x tick style  = { grid = major },
    ]
  
    
  \addplot[black,dashed] table {plots/RSSdelta1.txt};
  \addplot[black,dashed] table {plots/RSSdelta1neg.txt};
  
   \addplot[black,dashed] table {plots/RSSdelta2.txt};
  \addplot[black,dashed] table {plots/RSSdelta2neg.txt};
  
   \addplot[black] table {plots/RSSdelta3.txt};
  \addplot[black] table {plots/RSSdelta3neg.txt};
 
  \end{axis}

\end{tikzpicture}
\vspace{-2 mm}
\caption{The $ \delta $-limit of the p.p.d. plotted as $ U_r(r) $ (full line), in comparison with the analogical solutions to a thin shell (see Fig.~\ref{fig:shell}) at $ \varrho = \varrho_0 $ and $ \varrho = \varrho_1 $ (the inner velocities have been rescaled again). The setting is $ M = 1 $, $ \varrho_0 = 3 $, $ \varrho_1 = 6 $ and $ r_{\mathrm{in}} = 8 $.}
\label{fig:random_shell_sharp}
\end{figure}

The probabilistic particle in the vicinity of a random shell, driven by (\ref{pp_pst_kinematics}), behaves as if it was in the Schwarzschild spacetime with a rigged mass parameter: substituting $ \tilde{M} = M P_S(r) $ in (\ref{RSS_kinematics_half}) produces the equation (\ref{Schw_kinematics_simple}) with $ \tilde{M}(r) $ instead of $ M $. Our result is showing good agreement of the probabilistic particle formalism with the conventional test particle, yet does not go beyond that.\\

Our last effort towards depicting the effects of the random shell on a test object uses the prescription (\ref{A1_unexpanded}), artificially choosing time slices separated by a small $ \delta t $, on which the two versions of the p.p.d., evolved separately along respective metrics of the p.st., are summed up again\footnote{This process is a distant parallel to the collapse of wavefunction in quantum mechanics, which may be a partial justification for choosing the hypersurface where the collapse happens, i.e., the hypersurface of ``the measurement''.}.\\

We shall choose an initial condition of the Gaussian form
\begin{equation}
A(t_0,r,U_r) = N \exp \left(-\frac{(U_r - \bar{U}_r)^2}{\varepsilon_U} \right) \exp \left( -\frac{(r - \bar{r})^2}{\varepsilon_r} \right) \label{initialA}
\end{equation}
and proceed with the step-by-step evolution numerically. We will take small time steps $ \delta t $ and use a linear approximation of the argument, as in (\ref{A1}) (we can do this as long as $ U_t $ is not too small). We are interested in the change of shape of the p.p.d. when subjected to probabilistic geometry. The resulting step-by step evolution for a particle infalling through the random shell given by (\ref{random_shell_uniform}) is plotted in Fig.~\ref{fig:middle_incoming_r} and Fig.~\ref{fig:middle_incoming_U}. In the course of the calculation, we have set the volume element to 1 (instead of $ \sqrt{\vert g_{M,S} \vert} = r^{2} \sin \theta $) to make visible the effect of the probabilistic geometry (otherwise the p.p.d. would additionally decrease as $ \sim r^{-2} $). Also, we are putting aside the question of normalization.\\

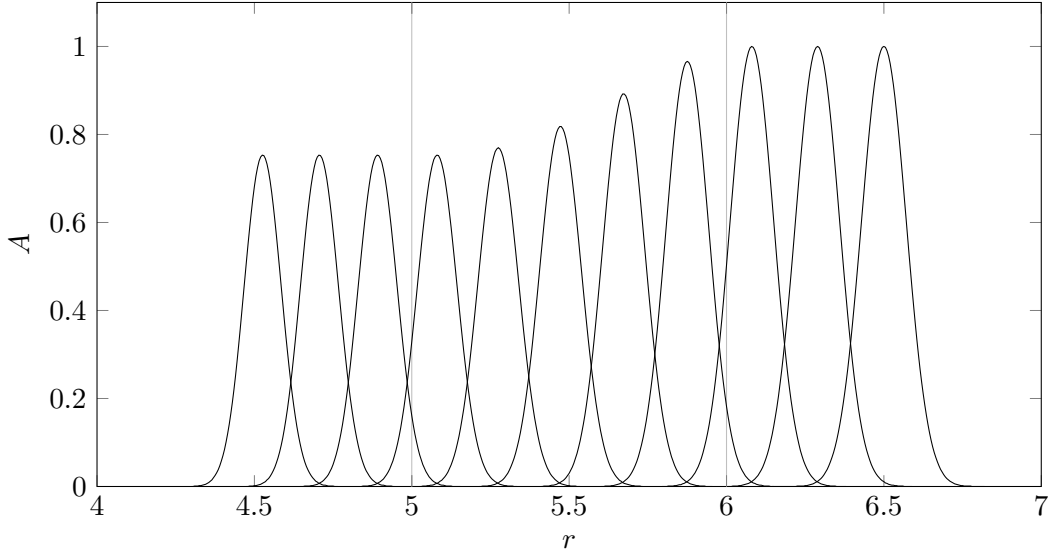
\begin{figure}[H]
\centering
\begin{tikzpicture}
  \begin{axis}[
	width=14 cm,
	height=8 cm,
    xmin=4,xmax=7,
    ymin=0,ymax=1.1,
    xlabel=$ r $,
    ylabel={$ A $},
    ylabel near ticks,
    xlabel near ticks,
    	/pgf/number format/.cd,
    	1000 sep={},
    extra x ticks       = {5,6},
    extra x tick labels = ,
    extra x tick style  = { grid = major },
    xtick={4, 4.5, 5, 5.5, 6, 6.5, 7},
    cycle list name=color list,
    ]
    
  \addplot[black] table {plots/RSS_middle_incoming_r0.txt};
  \addplot[black] table {plots/RSS_middle_incoming_r1.txt};
  \addplot[black] table {plots/RSS_middle_incoming_r2.txt};
  \addplot[black] table {plots/RSS_middle_incoming_r3.txt};
  \addplot[black] table {plots/RSS_middle_incoming_r4.txt};
  \addplot[black] table {plots/RSS_middle_incoming_r5.txt};
  \addplot[black] table {plots/RSS_middle_incoming_r6.txt};
  \addplot[black] table {plots/RSS_middle_incoming_r7.txt};
  \addplot[black] table {plots/RSS_middle_incoming_r8.txt};
  \addplot[black] table {plots/RSS_middle_incoming_r9.txt};
  \addplot[black] table {plots/RSS_middle_incoming_r10.txt};
  \end{axis}

\end{tikzpicture}
\vspace{-2 mm}
\caption{Sections of $ A $ in the maximum w.r.t. $ U_r $. We plot the function as evolved by the step-by-step prescription, infalling through a random shell. The setting is $ M = 1 $, $ \varrho_0 = 5 $, $ \varrho_1 = 6 $ and $ \delta t = 0.4 $. The initial condition is (\ref{initialA}) with $ \bar{r} = 6.5 $, $ \varepsilon_r = 0.01 $, $ \bar{U}_r = -1 $, $ \varepsilon_U = 0.0001 $, and for simplicity $ N = 1 $.}
\label{fig:middle_incoming_r}
\end{figure}

\begin{figure}[ht]
\centering
\begin{tikzpicture}
  \begin{axis}[
	width=14 cm,
	height=8 cm,
    xmin=-1.08,xmax=-0.95,
    ymin=0,ymax=1.1,
    xlabel=$ U_r $,
    ylabel={$ A $},
    ylabel near ticks,
    xlabel near ticks,
    	/pgf/number format/.cd,
    	1000 sep={},
    xtick={-1.06, -1.04, -1.02, -1, -0.98, -0.96},
    cycle list name=color list,
    ]
    
  \addplot[black] table {plots/RSS_middle_incoming_U0.txt};
  \addplot[black] table {plots/RSS_middle_incoming_U1.txt};
  \addplot[black] table {plots/RSS_middle_incoming_U2.txt};
  \addplot[black] table {plots/RSS_middle_incoming_U3.txt};
  \addplot[black] table {plots/RSS_middle_incoming_U4.txt};
  \addplot[black] table {plots/RSS_middle_incoming_U5.txt};
  \addplot[black] table {plots/RSS_middle_incoming_U6.txt};
  \addplot[black] table {plots/RSS_middle_incoming_U7.txt};
  \addplot[black] table {plots/RSS_middle_incoming_U8.txt};
  \addplot[black] table {plots/RSS_middle_incoming_U9.txt};
  \addplot[black] table {plots/RSS_middle_incoming_U10.txt};
 
  \end{axis}

\end{tikzpicture}
\vspace{-2 mm}
\caption{Sections of $ A $ in the maximum w.r.t. $ r $. The setting is that of Fig.~\ref{fig:middle_incoming_r}.}
\label{fig:middle_incoming_U}
\end{figure}
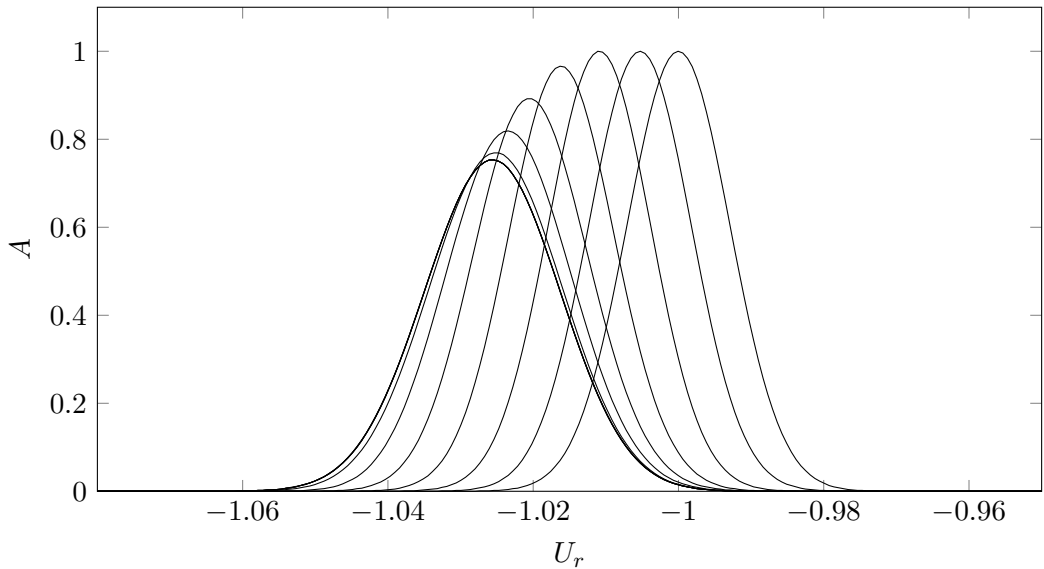

In the first plot, we can see that the shape of the section is not much changing until it reaches the probabilistic region $ \varrho_{0} \leq r \leq \varrho_{1} $, where it undergoes mixing of the Schwarzschild and the Minkowski component in each step, and dissolves accordingly. Once it gets to the interior of the shell, the shape of the section is again stabilized.\\

In the second plot, we can see what happens with velocity. First, the particle is accelerated by the Schwarzschild component, and the velocity distribution is shifting towards higher (absolute) values. In the probabilistic region, it dissolves, only to converge to a fixed shape in the flat interior.\\

To give the complete picture, we plot the initial and final p.p.d. in $ r $-$ U_r $ plane in Fig.~\ref{fig:3D}. One can spot that the final p.p.d. is not only dissolved in velocities, but also slightly tilted, which is due to the fact that particles of different velocities evolve in different ways.

\begin{figure}[ht]
\centering
 \begin{tikzpicture}[baseline]
 \begin{axis}[
 	width=6.6 cm,
	height=6.8 cm,
	view={0}{90},
	plot box ratio=1 1 1,
	colormap={backwhite}{[1cm] rgb255(0cm)=(255,255,255) rgb255(1cm)=(40,40,40)},
	shader=interp,
	colorbar/width=4 mm,
	ymin=-1.025,ymax=-0.975,
    xmin=6.2,xmax=6.8,
    zmin=0,zmax=1,
	xlabel={$ r $},
	ylabel={$ U_r $},
	zlabel={},
    ylabel near ticks,
    xlabel near ticks,
		/pgf/number format/.cd,
    	1000 sep={},
]
\addplot3[surf,shader=interp] file {plots/3Dplot_initial.csv};
\end{axis}
\end{tikzpicture}
\hskip 0 pt
\begin{tikzpicture}[baseline]
 \begin{axis}[
 	width=6.6 cm,
	height=6.8 cm,
	view={0}{90},
	plot box ratio=1 1 1,
	colormap={backwhite}{[1cm] rgb255(0cm)=(255,255,255) rgb255(1cm)=(40,40,40)},
	shader=interp,
	colorbar/width=4 mm,
	ymin=-1.05,ymax=-1.00,
    xmin=4.2,xmax=4.8,
    zmin=0,zmax=1,
	xlabel={$ r $},
	ylabel={},
	zlabel={},
    ylabel near ticks,
    xlabel near ticks,
		/pgf/number format/.cd,
    	1000 sep={},
]
\addplot3[surf,shader=interp] file {plots/3Dplot_final.csv};
\end{axis}
\end{tikzpicture}
\caption{An overall plot of the initial condition (on the left) and an overall plot of the p.p.d., as evolved by the 10 successive steps (on the right). The setting is that of preceding figures.}
\label{fig:3D}
\end{figure}
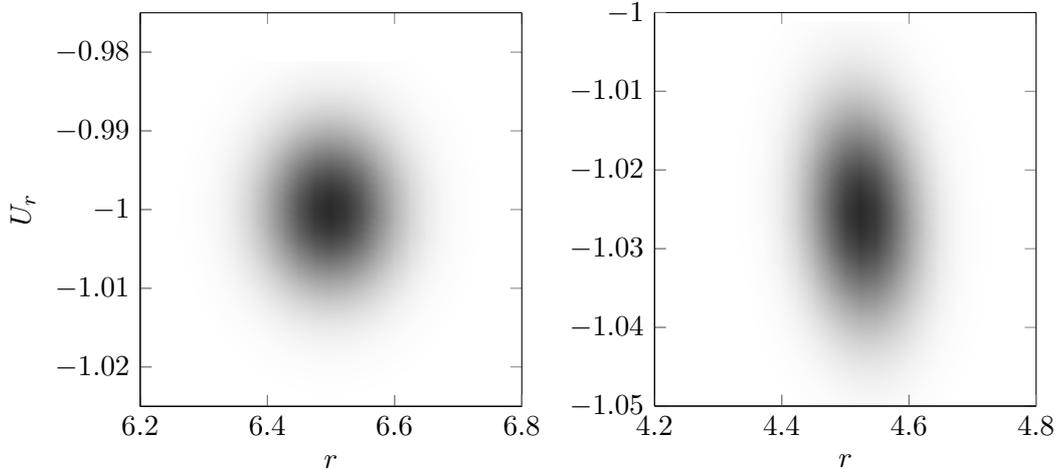

\section*{Conclusion}
\addcontentsline{toc}{section}{Conclusion}

On the foregone pages, we have introduced and studied a concept inspired by certain reflections on (path integral) quantum gravity approaches and the micro-structure of relativistic models of the world. Should we name the major points of the work done, they would be the following:
\begin{enumerate}[label=\roman*., itemsep=-0.2mm]
\item A definition of probabilistic spacetime has been provided. It is based upon the probability theory and implicitly incorporates the rules for computing probabilities of sets of spacetime realizations.
\item An easy technique for constructing rough measures for a general p.st., using so-called fragmentation, has been suggested and related to the notion of locally probabilistic spacetime.
\item A way to construct a special, extremely simple p.st. (the composite) with detailed measure has been provided. It is an implementation of the intuitive idea of collection of spacetimes endowed with probabilities.
\item The kinematics of the test point particle in the p.st. has been discussed, and a description of the probabilistic particle, its generalization, has been provided. Kinematics of the probabilistic particle has then become the subject of detailed consideration.
\item An example of a p.st. (the random shell) has been presented and studied, using the previously introduced formalism.
\end{enumerate}

Thanks to the simplicity and generality of the concept, we continue to see certain application potential in it. It was already mentioned that the probabilistic spacetime may describe the result of a quantum gravity computation, and if it does, it could be of benefit to understand its structure. Moreover, it turns out that the provided definition, intended to be as simple and as natural as possible, rests upon the most intriguing object, which is the space of metric field realizations  $ \mathscr{R} $. In the suggested formalism, defining a probabilistic spacetime goes hand in hand with defining an integral over some subsets of $ \mathscr{R} $. No need to say that integrating over $ \mathscr{R} $ is the main subject of research conducted within all path integral approaches to quantum gravity. It has thus happened that we have naturally returned to the topic which had lead our thoughts in the beginning. In any case, our belief is that the work done in between brings sometimes naive, yet novel (and maybe inspiring) viewpoint on this matter.

\def\bibname{Bibliography}

\end{document}